\providecommand{\U}[1]{\protect\rule{.1in}{.1in}}
\newtheorem{theorem}{Theorem}
\newtheorem{lemma}[theorem]{Lemma}
\newtheorem{proposition}[theorem]{Proposition}
\newtheorem{remark}[theorem]{Remark}
\newenvironment{proof}[1][Proof]{\noindent\textbf{#1.} }{\ \rule{0.5em}{0.5em}}
\numberwithin{equation}{section}
\begin{document}

\title{\textbf{Position-based coding and convex splitting for private communication
over quantum channels}}
\author{Mark M. Wilde\thanks{Hearne Institute for Theoretical Physics, Department of
Physics and Astronomy, Center for Computation and Technology, Louisiana State
University, Baton Rouge, Louisiana 70803, USA}}
\date{March 5, 2017}
\maketitle

\begin{abstract}
The classical-input quantum-output (cq)\ wiretap channel is a communication
model involving a classical sender $X$, a legitimate quantum receiver $B$, and
a quantum eavesdropper $E$. The goal of a private communication protocol that
uses such a channel is for the sender $X$ to transmit a message in such a way
that the legitimate receiver $B$\ can decode it reliably, while the
eavesdropper $E$\ learns essentially nothing about which message was
transmitted. The $\varepsilon$-one-shot private capacity of a cq wiretap
channel is equal to the maximum number of bits that can be transmitted over
the channel, such that the privacy error is no larger than $\varepsilon
\in(0,1)$. The present paper provides a lower bound on the $\varepsilon
$-one-shot private classical capacity, by exploiting the recently developed
techniques of Anshu, Devabathini, Jain, and Warsi, called position-based
coding and convex splitting. The lower bound is equal to a difference of the
hypothesis testing mutual information between $X$ and $B$ and the
\textquotedblleft alternate\textquotedblright\ smooth max-information between
$X$ and $E$. The one-shot lower bound then leads to a non-trivial lower bound
on the second-order coding rate for private classical communication over a
memoryless cq wiretap channel.

\end{abstract}

\section{Introduction}

Among the many results of information theory, the ability to use the noise in
a wiretap channel for the purpose of private communication stands out as one
of the great conceptual insights \cite{W75}. A classical wiretap channel is
modeled as a conditional probability distribution $p_{Y,Z|X}$, in which the
sender Alice has access to the input $X$ of the channel, the legitimate
receiver Bob has access to the output $Y$, and the eavesdropper Eve has access
to the output $Z$. The goal of private communication is for Alice and Bob to
use the wiretap channel in such a way that Alice communicates a message
reliably to Bob, while at the same time, Eve should not be able to determine
which message was transmitted. The author of \cite{W75} proved that the mutual
information difference%
\begin{equation}
\max_{p_{X}}\left[  I(X;Y)-I(X;Z)\right]  \label{eq:wiretap-MI-diff}%
\end{equation}
is an achievable rate for private communication over the wiretap channel, when
Alice and Bob are allowed to use it many independent times. Since then, the
interest in the wiretap channel has not waned, and there have been many
increasingly refined statements about achievable rates for private
communication over wiretap channels \cite{CK78,H06,T12,H13,YAG13,YSP16,TB16}.

Many years after the contribution of \cite{W75}, the protocol of quantum key
distribution was developed as a proposal for private communication over a
quantum channel \cite{bb84}. Quantum information theory started becoming a
field in its own right, during which many researchers revisited several of the
known results of Shannon's information theory under a quantum lens. This was
not merely an academic exercise: doing so revealed that remarkable
improvements in communication rates could be attained for physical channels of
practical interest if quantum-mechanical strategies are exploited
\cite{GGLMSY04}.

One important setting which was revisited is the wiretap channel, and in the
quantum case, the simplest extension of the classical model is given by the
classical-input quantum-output wiretap channel (abbreviated as~\textit{cq
wiretap channel}) \cite{ieee2005dev,1050633}. It is described as the following
map:%
\begin{equation}
x\rightarrow\rho_{BE}^{x}, \label{eq:cq-wiretap}%
\end{equation}
where $x$ is a classical symbol that Alice can input to the channel and
$\rho_{BE}^{x}$ is the joint output quantum state of Bob and Eve's system,
represented as a density operator acting on the tensor-product Hilbert space
of Bob and Eve's quantum systems. The goal of private communication over the
cq wiretap channel is similar to that for the classical wiretap channel.
However, in this case, Bob is allowed to perform a collective quantum
measurement over all of his output quantum systems in order to determine
Alice's message, while at the same time, we would like for it be difficult for
Eve to figure out anything about the transmitted message, even if she has
access to a quantum computer memory that can store all of the quantum systems
that she receives from the channel output. The authors of
\cite{ieee2005dev,1050633} independently proved that a quantum generalization
of the formula in \eqref{eq:wiretap-MI-diff} is an achievable rate for private
communication over a cq quantum wiretap channel, if Alice and Bob are allowed
to use it many independent times. Namely, they proved that the following
Holevo information difference is an achievable rate:%
\begin{equation}
\max_{p_{X}}\left[  I(X;B)-I(X;E)\right]  , \label{eq:cq-wiretap-Holevo-infos}%
\end{equation}
where the information quantities in the above formula are the Holevo
information to Bob and Eve, respectively, and will be formally defined later
in the present paper.

Since the developments of \cite{ieee2005dev,1050633}, there has been an
increasing interest in the quantum information community to determine refined
characterizations of communication tasks
\cite{TH12,li12,TT13,DTW14,DHO16,DL15,BDL15,TBR15,WTB16}, strongly motivated
by the fact that it is experimentally difficult to control a large number of
quantum systems, and in practice, one has access only to a finite number of
quantum systems anyway. One such scenario of interest, as discussed above, is
the quantum wiretap channel. Hitherto, the only work offering achievable
one-shot rates for private communication over cq wiretap channels is
\cite{RR11}. However, that work did not consider bounding the second-order
coding rate for private communication over the cq wiretap channel.

The main contribution of the present paper is a lower bound on the one-shot
private capacity of a cq wiretap channel. Namely, I prove that%
\begin{equation}
\log_{2}M_{\operatorname{priv}}^{\ast}(\varepsilon_{1}+\sqrt{\varepsilon_{2}%
})\geq I_{H}^{\varepsilon_{1}-\eta_{1}}(X;B)-\widetilde{I}_{\max}%
^{\sqrt{\varepsilon_{2}}-\eta_{2}}(E;X)-\log_{2}(4\varepsilon_{1}/\eta_{1}%
^{2})-2\log_{2}( 1/\eta_{2}) . \label{eq:1-shot-private-bnd}%
\end{equation}
In the above, $\log_{2}M_{\operatorname{priv}}^{\ast}(\varepsilon_{1}%
+\sqrt{\varepsilon_{2}})$ represents the maximum number of bits that can be
sent from Alice to Bob, using a cq wiretap channel once, such that the privacy
error (to be defined formally later)\ does not exceed $\varepsilon_{1}%
+\sqrt{\varepsilon_{2}}\in(0,1)$, with $\varepsilon_{1},\varepsilon_{2}%
\in(0,1)$. The quantities on the right-hand side of the above inequality are
particular one-shot generalizations of the Holevo information to Bob and Eve,
which will be defined later. It is worthwhile to note that the one-shot
information quantities in \eqref{eq:1-shot-private-bnd} can be computed using
semi-definite programming, and the computational runtime is polynomial in the
dimension of the channel. Thus, for channels of reasonable dimension, the
quantities can be efficiently estimated numerically. The constants $\eta_{1}$
and $\eta_{2}$ are chosen so that $\eta_{1}\in(0,\varepsilon_{1})$ and
$\eta_{2}\in(0,\sqrt{\varepsilon_{2}})$. By substituting an independent and
identically distributed (i.i.d.) cq wiretap channel into the right-hand side
of the above inequality, using second-order expansions for the one-shot Holevo
informations \cite{TH12,li12}, and picking $\eta_{1},\eta_{2}=1/\sqrt{n}$, we
find the following lower bound on the second-order coding rate for private
classical communication:%
\begin{multline}
\log_{2}M_{\operatorname{priv}}^{\ast}(n,\varepsilon_{1}+\sqrt{\varepsilon
_{2}})\geq n\left[  I(X;B)-I(X;E)\right] \\
+\sqrt{nV(X;B)}\Phi^{-1}(\varepsilon_{1})+\sqrt{nV(X;E)}\Phi^{-1}%
(\varepsilon_{2})+O(\log n).
\end{multline}
In the above, $\log_{2}M_{\operatorname{priv}}^{\ast}(n,\varepsilon_{1}%
+\sqrt{\varepsilon_{2}})$ represents the maximum number of bits that can be
sent from Alice to Bob, using a cq wiretap channel $n$ times, such that the
privacy error does not exceed $\varepsilon_{1}+\sqrt{\varepsilon_{2}}\in
(0,1)$. The Holevo informations from \eqref{eq:cq-wiretap-Holevo-infos} make
an appearance in the first-order term (proportional to the number $n$ of
channel uses) on the right-hand side above, while the second order term
(proportional to $\sqrt{n}$) consists of the quantum channel dispersion
quantities $V(X;B)$ and $V(X;E)$ \cite{TT13}, which will be defined later.
They additionally feature the inverse $\Phi^{-1}$ of the cumulative Gaussian
distribution function $\Phi$. Thus, the one-shot bound in
\eqref{eq:1-shot-private-bnd} leads to a lower bound on the second-order
coding rate, which is comparable to bounds that have appeared in the classical
information theory literature \cite{T12,YAG13,YSP16,TB16}.

To prove the one-shot bound in \eqref{eq:1-shot-private-bnd}, I use two recent
and remarkable techniques:\ position-based coding \cite{AJW17}\ and convex
splitting \cite{ADJ17}. The main idea of position-based coding \cite{AJW17}%
\ is conceptually simple. To communicate a classical message from Alice to
Bob, we allow them to share a quantum state $\rho_{RA}^{\otimes M}$ before
communication begins, where $M$ is the number of messages, Bob possesses the
$R$ systems, and Alice the $A$ systems. If Alice wishes to communicate message
$m$, then she sends the $m$th $A$ system through the channel. The reduced
state of Bob's systems is then%
\begin{equation}
\rho_{R_{1}}\otimes\cdots\otimes\rho_{R_{m-1}}\otimes\rho_{R_{m}B}\otimes
\rho_{R_{m+1}}\otimes\cdots\otimes\rho_{R_{M}},
\label{eq:position-based-decoding}%
\end{equation}
where $\rho_{R_{m}B}=\mathcal{N}_{A_{m}\rightarrow B}(\rho_{R_{m}A_{m}})$ and
$\mathcal{N}_{A_{m}\rightarrow B}$ is the quantum channel. For all $m^{\prime
}\neq m$, the reduced state for systems $R_{m^{\prime}}$ and $B$ is the
product state $\rho_{R_{m^{\prime}}}\otimes\rho_{B}$. However, the reduced
state of systems $R_{m}B$ is the (generally)\ correlated state $\rho_{R_{m}B}%
$. So if Bob has a binary measurement which can distinguish the joint state
$\rho_{RB}$ from the product state $\rho_{R}\otimes\rho_{B}$ sufficiently
well, he can base a decoding strategy off of this, and the scheme will be
reliable as long as the number of bits $\log_{2}M$ to be communicated is
chosen to be roughly equal to a one-shot mutual information known as
hypothesis testing mutual information (cf., \cite{WR12}). This is exactly what
is used in position-based coding, and the authors of \cite{AJW17} thus forged
a transparent and intuitive link between quantum hypothesis testing and
communication for the case of entanglement-assisted communication.

Convex splitting \cite{ADJ17} is rather intuitive as well and can be thought
of as dual to the coding scenario mentioned above. Suppose instead that Alice
and Bob have a means of generating the state in
\eqref{eq:position-based-decoding}, perhaps by the strategy mentioned above.
But now suppose that Alice chooses the variable $m$ uniformly at random, so
that the state, from the perspective of someone ignorant of the choice of $m$,
is the following mixture:%
\begin{equation}
\frac{1}{M}\sum_{m=1}^{M}\rho_{R_{1}}\otimes\cdots\otimes\rho_{R_{m-1}}%
\otimes\rho_{R_{m}B}\otimes\rho_{R_{m+1}}\otimes\cdots\otimes\rho_{R_{M}}.
\end{equation}
The convex-split lemma guarantees that as long as $\log_{2}M$ is roughly equal
to a one-shot mutual information known as the alternate smooth max-mutual
information, then the state above is nearly indistinguishable from the
product state $\rho_{R}^{\otimes M}\otimes\rho_{B}$.

Both position-based coding and convex splitting have been used recently and
effectively to establish a variety of results in one-shot quantum information
theory \cite{AJW17,ADJ17}. In the present paper, I use the approaches in
conjunction to construct codes for the cq wiretap channel. The main underlying
idea follows the original approach of \cite{W75}, by allowing for a message
variable $m\in\{1,\ldots,M\}$ and a local key variable $k\in\{1,\ldots,K\}$
(local randomness), the latter of which is selected uniformly at random and
used to confuse the eavesdropper Eve. Before communication begins, Alice, Bob,
and Eve are allowed share to $MK$ copies of the common randomness state
$\theta_{X_{A}X_{B}X_{E}}\equiv\sum_{x}p_{X}(x)|xxx\rangle\langle
xxx|_{X_{A}X_{B}X_{E}}$. We can think of the $MK$ copies of $\theta
_{X_{A}X_{B}X_{E}}$ as being partitioned into $M$ blocks, each of which
contain $K$ copies of the state $\theta_{X_{A}X_{B}X_{E}}$. If Alice wishes to
send message $m$, then she picks $k$ uniformly at random and sends the $(m,k)$
$X_{A}$ system through the cq wiretap channel in \eqref{eq:cq-wiretap}. As
long as $\log_{2}MK$ is roughly equal to the hypothesis testing mutual
information $I_{H}^{\varepsilon}(X;B)$, then Bob can use a position-based
decoder to figure out both $m$ and $k$. As long as $\log_{2}K$ is roughly
equal to the alternate smooth max-mutual information $\widetilde{I}_{\max
}^{\varepsilon}(E;X)$, then the convex-split lemma guarantees that the overall
state of Eve's systems, regardless of which message $m$ was chosen, is nearly
indistinguishable from the product state $\rho_{X_{E}}^{\otimes MK}\otimes
\rho_{E}$. Thus, in such a scheme, Bob can figure out $m$ while Eve cannot
figure out anything about $m$. This is the intuition behind the coding scheme
and gives a sense of why $\log_{2}M=\log_{2}MK-\log_{2}K\approx I_{H}%
^{\varepsilon}(X;B)-\widetilde{I}_{\max}^{\varepsilon}(E;X)$ is an achievable
number of bits that can be sent privately from Alice to Bob. The main purpose
of the present paper is to develop the details of this argument and
furthermore show how the scheme can be derandomized, so that the $MK$ copies
of the common randomness state $\theta_{X_{A}X_{B}X_{E}}$ are in fact not necessary.

The rest of the paper proceeds as follows. In Section~\ref{sec:prelim}, I
review some preliminary material, which includes several metrics for quantum
states and pertinent information measures. Section~\ref{sec:public-comm}
develops the position-based coding approach for classical-input quantum-output
communication channels. Position-based coding was developed in \cite{AJW17}%
\ to highlight a different approach to entanglement-assisted communication,
but I show in Section~\ref{sec:public-comm} how the approach can be used for
shared randomness-assisted communication; I also show therein how to
derandomize codes in this case (i.e., the shared randomness is not actually
necessary for classical communication over cq channels).
Section~\ref{sec:priv-comm} represents the main contribution of the present
paper, which is a lower bound on the $\varepsilon$-one-shot private classical
capacity of a cq wiretap channel. The last development in
Section~\ref{sec:priv-comm} is to show how the one-shot lower bound leads to a
lower bound on the second-order coding rate for private classical
communication over a memoryless cq wiretap channel. 
Therein, I also show how these lower bounds simplify for pure-state cq wiretap channels and when using binary phase-shift keying as a coding strategy for private communication over a pure-loss bosonic channel.
Section~\ref{sec:concl}%
\ concludes with a summary and some open questions for future work.

\section{Preliminaries\label{sec:prelim}}

I use notation and concepts that are standard in quantum information theory
and point the reader to \cite{W15book} for background. In the rest of this
section, I review concepts that are less standard and set some notation that
will be used later in the paper.

\bigskip

\textbf{Trace distance, fidelity, and purified distance.} Let $\mathcal{D}%
(\mathcal{H})$ denote the set of density operators acting on a Hilbert space
$\mathcal{H}$, let $\mathcal{D}_{\leq}(\mathcal{H})$ denote the set of
subnormalized density operators (with trace not exceeding one) acting on
$\mathcal{H}$, and let $\mathcal{L}_{+}(\mathcal{H})$ denote the set of
positive semi-definite operators acting on $\mathcal{H}$. The trace distance
between two quantum states $\rho,\sigma\in\mathcal{D}(\mathcal{H})$\ is equal
to $\left\Vert \rho-\sigma\right\Vert _{1}$, where $\left\Vert C\right\Vert
_{1}\equiv\operatorname{Tr}\{\sqrt{C^{\dag}C}\}$ for any operator $C$. It has
a direct operational interpretation in terms of the distinguishability of
these states. That is, if $\rho$ or $\sigma$ are prepared with equal
probability and the task is to distinguish them via some quantum measurement,
then the optimal success probability in doing so is equal to $\left(
1+\left\Vert \rho-\sigma\right\Vert _{1}/2\right)  /2$. The fidelity is
defined as $F(\rho,\sigma)\equiv\left\Vert \sqrt{\rho}\sqrt{\sigma}\right\Vert
_{1}^{2}$ \cite{U76}, and more generally we can use the same formula to define
$F(P,Q)$ if $P,Q\in\mathcal{L}_{+}(\mathcal{H})$. Uhlmann's theorem states
that \cite{U76}%
\begin{equation}
F(\rho_{A},\sigma_{A})=\max_{U}\left\vert \langle\phi^{\sigma}|_{RA}%
U_{R}\otimes I_{A}|\phi^{\rho}\rangle_{RA}\right\vert ^{2},
\label{eq:uhlmann-thm}%
\end{equation}
where $|\phi^{\rho}\rangle_{RA}$ and $|\phi^{\sigma}\rangle_{RA}$ are fixed
purifications of $\rho_{A}$ and $\sigma_{A}$, respectively, and the
optimization is with respect to all unitaries $U_{R}$. The same statement
holds more generally for $P,Q\in\mathcal{L}_{+}(\mathcal{H})$. The fidelity is
invariant with respect to isometries and monotone non-decreasing with respect
to channels. The sine distance or $C$-distance between two quantum states
$\rho,\sigma\in\mathcal{D}(\mathcal{H})$ was defined as
\begin{equation}
C(\rho,\sigma)\equiv\sqrt{1-F(\rho,\sigma)}%
\end{equation}
and proven to be a metric in \cite{R02,R03,GLN04,R06}. It was
later~\cite{TCR09} (under the name \textquotedblleft purified
distance\textquotedblright) shown to be a metric on subnormalized states
$\rho,\sigma\in\mathcal{D}_{\leq}(\mathcal{H})$ via the embedding
\begin{equation}
P(\rho,\sigma)\equiv C(\rho\oplus\left[  1-\operatorname{Tr}\{\rho\}\right]
,\sigma\oplus\left[  1-\operatorname{Tr}\{\sigma\}\right]  )\,.
\label{eq:purified-distance}%
\end{equation}
The following inequality relates trace distance and purified distance:%
\begin{equation}
\frac{1}{2}\left\Vert \rho-\sigma\right\Vert _{1}\leq P(\rho,\sigma).
\label{eq:TD-to-PD}%
\end{equation}

\bigskip

\textbf{Relative entropies and variances.} The quantum relative entropy of two
states $\omega$ and $\tau$ is defined as \cite{U62}%
\begin{equation}
D(\omega\Vert\tau)\equiv\operatorname{Tr}\{\omega\lbrack\log_{2}\omega
-\log_{2}\tau]\}
\end{equation}
whenever $\operatorname{supp}(\omega)\subseteq\operatorname{supp}(\tau)$ and
it is equal to $+\infty$ otherwise. The quantum relative entropy variance is
defined as \cite{TH12,li12}%
\begin{equation}
V(\omega\Vert\tau)\equiv\operatorname{Tr}\{\omega\lbrack\log_{2}\omega
-\log_{2}\tau-D(\omega\Vert\tau)]^{2}\},
\end{equation}
whenever $\operatorname{supp}(\omega)\subseteq\operatorname{supp}(\tau)$. The
hypothesis testing relative entropy \cite{BD10,WR12}\ of states $\omega$ and
$\tau$ is defined as%
\begin{equation}
D_{H}^{\varepsilon}(\omega\Vert\tau)\equiv-\log_{2}\inf_{\Lambda}\left\{
\operatorname{Tr}\{\Lambda\tau\}:0\leq\Lambda\leq I\wedge\operatorname{Tr}%
\{\Lambda\omega\}\geq1-\varepsilon\right\}  .
\end{equation}
The max-relative entropy for states $\omega$ and $\tau$ is defined as
\cite{D09}%
\begin{equation}
D_{\max}(\omega\Vert\tau)\equiv\inf\left\{  \lambda\in\mathbb{R}:\omega
\leq2^{\lambda}\tau\right\}  .
\end{equation}
The smooth max-relative entropy for states $\omega$ and $\tau$ and a parameter
$\varepsilon\in(0,1)$ is defined as \cite{D09}%
\begin{equation}
D_{\max}^{\varepsilon}(\omega\Vert\tau)\equiv\inf\left\{  \lambda\in
\mathbb{R}:\widetilde{\omega}\leq2^{\lambda}\tau\wedge P(\omega,\widetilde
{\omega})\leq\varepsilon\right\}  .
\end{equation}
The following second-order expansions hold for $D_{H}^{\varepsilon}$ and
$D_{\max}^{\varepsilon}$ when evaluated for tensor-power states
\cite{TH12,li12}:%
\begin{align}
D_{H}^{\varepsilon}(\omega^{\otimes n}\Vert\tau^{\otimes n})  &
=nD(\omega\Vert\tau)+\sqrt{nV(\omega\Vert\tau)}\Phi^{-1}(\varepsilon)+O(\log
n),\label{eq:expand-1}\\
D_{\max}^{\sqrt{\varepsilon}}(\omega^{\otimes n}\Vert\tau^{\otimes n})  &
=nD(\omega\Vert\tau)-\sqrt{nV(\omega\Vert\tau)}\Phi^{-1}(\varepsilon)+O(\log
n). \label{eq:expand-2}%
\end{align}
The above expansion features the cumulative distribution function for a
standard normal random variable:%
\begin{equation}
\Phi(a)\equiv\frac{1}{\sqrt{2\pi}}\int_{-\infty}^{a}dx\,\exp\left(
-x^{2}/2\right)  , \label{eq:cumul-gauss}%
\end{equation}
and its inverse, defined as $\Phi^{-1}(\varepsilon)\equiv\sup\left\{
a\in\mathbb{R}\,|\,\Phi(a)\leq\varepsilon\right\}  $.

\bigskip

\textbf{Mutual informations and variances.} The quantum mutual information
$I(X;B)_{\rho}$ and information variance $V(X;B)_{\rho}$ of a bipartite state
$\rho_{XB}$ are defined as%
\begin{align}
I(X;B)_{\rho}  &  \equiv D(\rho_{XB}\Vert\rho_{X}\otimes\rho_{B}),\\
V(X;B)_{\rho}  &  \equiv V(\rho_{XB}\Vert\rho_{X}\otimes\rho_{B}).
\end{align}
In this paper, we are exclusively interested in the case in which system $X$
of $\rho_{XB}$ is classical, so that $\rho_{XB}$ can be written as%
\begin{equation}
\rho_{XB}=\sum_{x}p_{X}(x)|x\rangle\langle x|_{X}\otimes\rho_{B}^{x},
\end{equation}
where $p_{X}$ is a probability distribution, $\{|x\rangle_{X}\}_{x}$ is an
orthonormal basis, and $\{\rho_{B}^{x}\}_{x}$ is a set of quantum states. The
hypothesis testing mutual information is defined as follows for a bipartite
state $\rho_{XB}$ and a parameter $\varepsilon\in(0,1)$:%
\begin{equation}
I_{H}^{\varepsilon}(X;B)_{\rho}\equiv D_{H}^{\varepsilon}(\rho_{XB}\Vert
\rho_{X}\otimes\rho_{B}).
\end{equation}
From the smooth max-relative entropy, one can define a mutual information-like
quantity for a state $\theta_{AB}$ as follows:%
\begin{equation}
D_{\max}^{\varepsilon}(\theta_{AB}\Vert\theta_{A}\otimes\theta_{B}).
\label{eq:D_max-MI}%
\end{equation}
Note that we have the following expansions, as a direct consequence of
\eqref{eq:expand-1}--\eqref{eq:expand-2}\ and definitions:%
\begin{align}
I_{H}^{\varepsilon}(X^{n};B^{n})_{\rho^{\otimes n}}  &  =nI(X;B)_{\rho}%
+\sqrt{nV(X;B)_{\rho}}\Phi^{-1}(\varepsilon)+O(\log n),\label{eq:MI-expand}\\
D_{\max}^{\sqrt{\varepsilon}}(\rho_{XB}^{\otimes n}\Vert\rho_{X}^{\otimes
n}\otimes\rho_{B}^{\otimes n})  &  =nI(X;B)_{\rho}-\sqrt{nV(X;B)_{\rho}}%
\Phi^{-1}(\varepsilon)+O(\log n). \label{eq:MI-expand-2}%
\end{align}
Another quantity, related to that in \eqref{eq:D_max-MI}, is as follows
\cite{AJW17}:%
\begin{equation}
\widetilde{I}_{\max}^{\varepsilon}(B;A)_{\theta}\equiv\inf_{\theta
_{AB}^{\prime}\ :\ P(\theta_{AB}^{\prime},\theta_{AB})\leq\varepsilon}D_{\max
}(\theta_{AB}^{\prime}\Vert\theta_{A}\otimes\theta_{B}^{\prime}).
\label{eq:tilde-I-max}%
\end{equation}
We recall a relation \cite[Lemma~1]{AJW17} between the quantities in
\eqref{eq:D_max-MI}\ and \eqref{eq:tilde-I-max}, giving a very slight
modification of it which will be useful for our purposes here:

\begin{lemma}
\label{lem:alt-dmax-to-dmax-smooth}For a state $\theta_{AB}$, $\varepsilon
\in(0,1)$, and $\gamma\in(0,\varepsilon)$, the following inequality holds%
\begin{equation}
\widetilde{I}_{\max}^{\varepsilon}(B;A)_{\theta}\leq D_{\max}^{\varepsilon
-\gamma}(\theta_{AB}\Vert\theta_{A}\otimes\theta_{B})+\log_{2}\!\left(
\frac{3}{\gamma^{2}}\right)  . \label{eq:alt-dmax-to-dmax}%
\end{equation}

\end{lemma}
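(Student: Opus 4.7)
The plan is to exhibit an explicit candidate for the infimum defining $\widetilde{I}_{\max}^{\varepsilon}(B;A)_{\theta}$ by interpolating a near-optimizer of $D_{\max}^{\varepsilon-\gamma}(\theta_{AB}\Vert\theta_{A}\otimes\theta_{B})$ with the product state $\theta_{A}\otimes\theta_{B}$. Concretely, set $\lambda=D_{\max}^{\varepsilon-\gamma}(\theta_{AB}\Vert\theta_{A}\otimes\theta_{B})$ and let $\widetilde{\theta}_{AB}$ be a (subnormalized) near-optimizer, so that $\widetilde{\theta}_{AB}\leq 2^{\lambda}\theta_{A}\otimes\theta_{B}$ and $P(\widetilde{\theta}_{AB},\theta_{AB})\leq\varepsilon-\gamma$. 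For a parameter $q>0$ to be chosen later, define the candidate
\begin{equation}
\theta'_{AB} \;:=\; \frac{\widetilde{\theta}_{AB}+q\,\theta_{A}\otimes\theta_{B}}{1+q}.
\end{equation}
The whole proof comes down to choosing $q$ so that (a) $D_{\max}(\theta'_{AB}\Vert\theta_{A}\otimes\theta'_{B})$ is close to $\lambda$, and (b) $\theta'_{AB}$ remains within purified distance $\varepsilon$ of $\theta_{AB}$.

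For step (a), taking the partial trace over $A$ gives $\theta'_{B}=(\widetilde{\theta}_{B}+q\theta_{B})/(1+q)$, hence $\theta_{A}\otimes\theta'_{B}\geq\tfrac{q}{1+q}\,\theta_{A}\otimes\theta_{B}$ because $\widetilde{\theta}_{B}\geq 0$. Combining this with $\theta'_{AB}\leq\tfrac{2^{\lambda}+q}{1+q}\theta_{A}\otimes\theta_{B}$ yields the operator inequality $\theta'_{AB}\leq\tfrac{2^{\lambda}+q}{q}\,\theta_{A}\otimes\theta'_{B}$, and hence $D_{\max}(\theta'_{AB}\Vert\theta_{A}\otimes\theta'_{B})\leq\lambda+\log_{2}(1+1/q)$ (using that the lemma is non-trivial only when $\lambda\geq 0$, which can be assumed). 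Setting $q=\gamma^{2}/(3-\gamma^{2})$ gives $1+1/q=3/\gamma^{2}$ exactly, producing the desired $\log_{2}(3/\gamma^{2})$ overhead. For step (b), the triangle inequality for purified distance reduces the task to showing $P(\theta'_{AB},\widetilde{\theta}_{AB})\leq\gamma$. A direct computation gives $\theta'_{AB}-\widetilde{\theta}_{AB}=\tfrac{q}{1+q}(\theta_{A}\otimes\theta_{B}-\widetilde{\theta}_{AB})$, from which $\Vert\theta'_{AB}-\widetilde{\theta}_{AB}\Vert_{1}+|\operatorname{Tr}\theta'_{AB}-\operatorname{Tr}\widetilde{\theta}_{AB}|\leq 2q/(1+q)$; passing to normalized states via the embedding in \eqref{eq:purified-distance} and applying the Fuchs--van de Graaf inequality then gives $P(\theta'_{AB},\widetilde{\theta}_{AB})\leq\sqrt{2q/(1+q)}=\gamma\sqrt{2/3}\leq\gamma$, as needed.

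The main obstacle is calibrating $q$ so that the two competing penalties balance: enlarging $q$ drives $\theta_{A}\otimes\theta'_{B}$ closer to $\tfrac{q}{1+q}\theta_{A}\otimes\theta_{B}$ (helping the max-divergence bound), while shrinking $q$ keeps $\theta'_{AB}$ close to $\widetilde{\theta}_{AB}$ (helping the smoothing constraint); the specific value $q=\gamma^{2}/(3-\gamma^{2})$ is what manufactures the constant $3$ in $\log_{2}(3/\gamma^{2})$. A secondary subtlety is the need to treat $\widetilde{\theta}_{AB}$ as subnormalized throughout and to use the subnormalized form of purified distance when applying the triangle inequality and Fuchs--van de Graaf.
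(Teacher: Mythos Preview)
Your proof is correct and follows essentially the same route as the paper's: the paper invokes \cite[Claim~2]{AJW17} as a black box to pass from the optimizer $\theta^{*}_{AB}$ of $D_{\max}^{\varepsilon-\gamma}(\theta_{AB}\Vert\theta_{A}\otimes\theta_{B})$ to a nearby state $\overline{\theta}_{AB}$ satisfying the required $D_{\max}$ bound against $\theta_{A}\otimes\overline{\theta}_{B}$, and then concludes via the triangle inequality for purified distance. Your explicit interpolation $\theta'_{AB}=(\widetilde{\theta}_{AB}+q\,\theta_{A}\otimes\theta_{B})/(1+q)$ with $q=\gamma^{2}/(3-\gamma^{2})$ is precisely the construction that underlies that claim, so you have simply unpacked the black box.
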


\begin{proof}
To see this, recall \cite[Claim~2]{AJW17}: For states $\omega_{AB}$, $\xi_{A}%
$, $\kappa_{B}$, there exists a state $\overline{\omega}_{AB}$ such that
$P(\omega_{AB},\overline{\omega}_{AB})\leq\delta$ and%
\begin{equation}
D_{\max}(\overline{\omega}_{AB}\Vert\xi_{A}\otimes\overline{\omega}_{B})\leq
D_{\max}(\omega_{AB}\Vert\xi_{A}\otimes\kappa_{B})+\log_{2}\!\left(  \frac
{3}{\delta^{2}}\right)  . \label{eq:claim-2}%
\end{equation}
Let $\theta_{AB}^{\ast}$ denote the optimizer for $D_{\max}^{\varepsilon
-\gamma}(\theta_{AB}\Vert\theta_{A}\otimes\theta_{B})$. Then, in
\eqref{eq:claim-2}, taking $\omega_{AB}=\theta_{AB}^{\ast}$, $\xi_{A}%
=\theta_{A}$, $\kappa_{B}=\theta_{B}$, we find that there exists a state
$\overline{\theta}_{AB}$ such that $P(\theta_{AB}^{\ast},\overline{\theta
}_{AB})\leq\gamma$ and%
\begin{equation}
D_{\max}(\overline{\theta}_{AB}\Vert\theta_{A}\otimes\overline{\theta}%
_{B})\leq D_{\max}^{\varepsilon-\gamma}(\theta_{AB}\Vert\theta_{A}%
\otimes\theta_{B})+\log_{2}\!\left(  \frac{3}{\gamma^{2}}\right)  .
\end{equation}
By the triangle inequality for the purified distance, we conclude that
$P(\theta_{AB},\overline{\theta}_{AB})\leq P(\theta_{AB},\theta_{AB}^{\ast
})+P(\theta_{AB}^{\ast},\overline{\theta}_{AB})\leq\left(  \varepsilon
-\gamma\right)  +\gamma=\varepsilon$. Since the quantity on the left-hand side
includes an optimization over all states $\theta_{AB}^{\prime}$ satisfying
$P(\theta_{AB}^{\prime},\theta_{AB})\leq\varepsilon$, we conclude the
inequality in \eqref{eq:alt-dmax-to-dmax}.
\end{proof}

\bigskip

\textbf{Hayashi--Nagaoka operator inequality.} A key tool in analyzing error
probabilities in communication protocols is the Hayashi--Nagaoka operator
inequality \cite{HN03}:\ given operators $S$ and $T$ such that $0\leq S\leq I$
and $T\geq0$, the following inequality holds for all $c>0$%
\begin{equation}
I-(S+T)^{-1/2}S(S+T)^{-1/2}\leq(1+c)(I-S)+(2+c+c^{-1})T. \label{eq:HN-ineq}%
\end{equation}

\bigskip

\textbf{Convex-split lemma.} The convex-split lemma from \cite{ADJ17} has been
a key tool used in recent developments in quantum information theory
\cite{AJW17,ADJ17}. We now state a variant of the convex-split lemma, which is
helpful for obtaining one-shot bounds for privacy and an ensuing lower bound
on the second-order coding rate. Its proof closely follows proofs available in
\cite{AJW17,ADJ17} but has some slight differences. For completeness,
Appendix~\ref{app:convex-split} contains a proof of
Lemma~\ref{thm:convex-split}.

\begin{lemma}
[Convex split]\label{thm:convex-split}Let $\rho_{AB}$ be a state, and let
$\tau_{A_{1}\cdots A_{K}B}$ be the following state:%
\begin{equation}
\tau_{A_{1}\cdots A_{K}B}\equiv\frac{1}{K}\sum_{k=1}^{K}\rho_{A_{1}}%
\otimes\cdots\otimes\rho_{A_{k-1}}\otimes\rho_{A_{k}B}\otimes\rho_{A_{k+1}%
}\otimes\cdots\otimes\rho_{A_{K}}.
\end{equation}
Let $\varepsilon\in(0,1)$ and $\eta\in(0,\sqrt{\varepsilon})$. If%
\begin{equation}
\log_{2}K=\widetilde{I}_{\max}^{\sqrt{\varepsilon}-\eta}(B;A)_{\rho}+2\log
_{2}\!\left(  \frac{1}{\eta}\right)  ,
\end{equation}
then%
\begin{equation}
P(\tau_{A_{1}\cdots A_{K}B},\rho_{A_{1}}\otimes\cdots\otimes\rho_{A_{K}%
}\otimes\widetilde{\rho}_{B})\leq\sqrt{\varepsilon},
\label{eq:perf-error-convex-split}%
\end{equation}
for some state $\widetilde{\rho}_{B}$ such that $P(\rho_{B},\widetilde{\rho
}_{B})\leq\sqrt{\varepsilon}-\eta$.
\end{lemma}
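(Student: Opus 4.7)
Smooth $\rho_{AB}$ to a state $\rho'_{AB}$ satisfying the operator inequality $\rho'_{AB}\leq 2^{\lambda}(\rho_A\otimes\rho'_B)$, where $\lambda\equiv\widetilde{I}_{\max}^{\sqrt{\varepsilon}-\eta}(B;A)_\rho$; substitute $\rho_{A_kB}$ by $\rho'_{A_kB}$ inside each summand of $\tau$ to form an intermediate state $\tau'$; and bound $P(\tau,\omega)$ via the triangle inequality for purified distance through $\tau'$, where $\omega\equiv\rho_{A_1}\otimes\cdots\otimes\rho_{A_K}\otimes\rho'_B$. Existence of such a $\rho'_{AB}$ with $P(\rho_{AB},\rho'_{AB})\leq\sqrt{\varepsilon}-\eta$ is guaranteed by the definition~\eqref{eq:tilde-I-max}; choosing $\widetilde{\rho}_B\equiv\rho'_B$ and applying monotonicity of the purified distance under the partial trace over $A$ yields the secondary claim $P(\rho_B,\widetilde{\rho}_B)\leq\sqrt{\varepsilon}-\eta$.

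\textbf{First leg.} Write $\mu_k$ and $\mu'_k$ for the $k$th summands of $\tau$ and $\tau'=K^{-1}\sum_k\mu'_k$. Since the tensor factors outside position $k$ coincide in $\mu_k$ and $\mu'_k$, multiplicativity of the fidelity across tensor products gives $F(\mu_k,\mu'_k)=F(\rho_{AB},\rho'_{AB})$, and strong concavity of the fidelity then yields $F(\tau,\tau')\geq K^{-1}\sum_k F(\mu_k,\mu'_k)=F(\rho_{AB},\rho'_{AB})$. Hence $P(\tau,\tau')\leq P(\rho_{AB},\rho'_{AB})\leq\sqrt{\varepsilon}-\eta$.

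\textbf{Second leg and main obstacle.} The hard step is to compute the $\chi^2$-divergence
\begin{equation}
\chi^2(\tau'\Vert\omega)=\operatorname{Tr}[(\tau')^2\omega^{-1}]-1=\frac{1}{K^{2}}\sum_{k,k'}\operatorname{Tr}[\mu'_k\omega^{-1}\mu'_{k'}]-1
\end{equation}
to the right order. On every system $A_j$ with $j\notin\{k,k'\}$, the three factors multiply to $\rho_{A_j}\rho_{A_j}^{-1}\rho_{A_j}=\rho_{A_j}$, contributing unity to the trace. The $K$ diagonal terms ($k=k'$) reduce to $\operatorname{Tr}[\rho'_{AB}(\rho_A\otimes\rho'_B)^{-1}\rho'_{AB}]\leq 2^{\lambda}\operatorname{Tr}[\rho'_{AB}]=2^{\lambda}$ via the operator inequality. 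For the $K(K-1)$ off-diagonal terms ($k\neq k'$), tracing out $A_{k'}$ first converts $\rho_{A_{k'}}\rho_{A_{k'}}^{-1}\rho'_{A_{k'}B}$ into $\rho'_B$; the remaining operator on $A_kB$ then simplifies via $\rho_{A_k}^{-1}\rho_{A_k}=I_{A_k}$ and $(\rho'_B)^{-1}\rho'_B=I_B$ to $\rho'_{A_kB}$, whose trace equals one. Summing these contributions gives $\chi^2(\tau'\Vert\omega)\leq(2^{\lambda}-1)/K\leq\eta^{2}$ by the choice $\log_2 K=\lambda+2\log_2(1/\eta)$; extracting exactly unity from each off-diagonal term (rather than the crude bound $2^{\lambda}$) is the delicate step, and is what permits $\log_2 K$ to scale linearly in $\lambda$ rather than $2\lambda$. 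The chain $-\log_2 F(\tau',\omega)\leq D(\tau'\Vert\omega)\leq D_2(\tau'\Vert\omega)=\log_2(1+\chi^2(\tau'\Vert\omega))$---the first step being the fidelity-to-relative-entropy inequality, the second being monotonicity of quantum R\'enyi divergences in their order---then gives $F(\tau',\omega)\geq(1+\eta^{2})^{-1}\geq 1-\eta^{2}$, so $P(\tau',\omega)\leq\eta$. The triangle inequality for purified distance finally yields
\begin{equation}
P(\tau,\omega)\leq P(\tau,\tau')+P(\tau',\omega)\leq(\sqrt{\varepsilon}-\eta)+\eta=\sqrt{\varepsilon},
\end{equation}
which is~\eqref{eq:perf-error-convex-split}.
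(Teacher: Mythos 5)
Your proposal is correct, and it reaches the bound by a genuinely different route than the paper for the core estimate. Both proofs share the same skeleton: smooth to the optimizer $\rho'_{AB}$ of $\widetilde{I}_{\max}^{\sqrt{\varepsilon}-\eta}(B;A)_{\rho}$, form the intermediate mixture $\tau'$, control $P(\tau,\tau')\leq\sqrt{\varepsilon}-\eta$ by joint concavity of the root fidelity, take $\widetilde{\rho}_B=\rho'_B$, and finish with the triangle inequality. Where you diverge is in bounding $P(\tau',\omega)$: the paper (following Anshu et al.) applies the mixture identity $D(\sum_y p_Y(y)\omega^y\Vert\kappa)=\sum_y p_Y(y)[D(\omega^y\Vert\kappa)-D(\omega^y\Vert\omega)]$, reduces the subtracted term to $D(\rho_{AB}\Vert [1/K]\rho'_{AB}+[1-1/K]\rho_A\otimes\rho'_B)$, and uses anti-monotonicity of $D$ in its second argument together with the operator inequality to get $D(\tau'\Vert\omega)\leq\log_2[1+(2^{\lambda}-1)/K]$; you instead compute the Petz R\'enyi-2 (collision) divergence directly, showing $\operatorname{Tr}[(\tau')^2\omega^{-1}]\leq 1+(2^{\lambda}-1)/K$ by an explicit trace calculation, and then chain $-\log_2 F\leq D\leq D_2$. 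The two estimates are numerically identical, and you correctly identify the shared crucial point: the $K(K-1)$ cross terms must contribute exactly unity (your partial-trace cancellation plays the role of the paper's subtracted relative-entropy term), which is what makes $\log_2 K$ scale as $\lambda$ rather than $2\lambda$. Your route is arguably more self-contained --- it needs only cyclicity of the trace, the operator inequality $\rho'_{AB}\leq 2^{\lambda}\rho_A\otimes\rho'_B$ (which also guarantees the support containment needed to make sense of $\omega^{-1}$ on supports), and monotonicity of the Petz R\'enyi family in $\alpha$ --- whereas the paper's route avoids manipulating inverses but relies on the less elementary decomposition identity \eqref{eq:rel-ent-prop}. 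Two cosmetic remarks: strong concavity of the fidelity actually yields $F(\tau,\tau')\geq\bigl(K^{-1}\sum_k\sqrt{F(\mu_k,\mu'_k)}\bigr)^2$ rather than $K^{-1}\sum_k F(\mu_k,\mu'_k)$, though since all summands equal $F(\rho_{AB},\rho'_{AB})$ the conclusion is unaffected; and you could shorten the final chain to $-\log_2 F=D_{1/2}\leq D_2$ directly.
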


\section{Public classical communication\label{sec:public-comm}}

\subsection{Definition of the one-shot classical capacity}

We begin by defining the $\varepsilon$-one-shot classical capacity of a cq
channel%
\begin{equation}
x\rightarrow\rho_{B}^{x}. \label{eq:cq-channel}%
\end{equation}
We can write the classical--quantum channel in fully quantum form as the
following quantum channel:%
\begin{equation}
\mathcal{N}_{X^{\prime}\rightarrow B}(\sigma_{X^{\prime}})=\sum_{x}\langle
x|_{X^{\prime}}\sigma_{X^{\prime}}|x\rangle_{X^{\prime}}\rho_{B}^{x},
\end{equation}
where $\{|x\rangle_{X^{\prime}}\}_{x}$ is some orthonormal basis. Let
$M\in\mathbb{N}$ and $\varepsilon\in(0,1)$. An $(M,\varepsilon)$ classical
communication code consists of a collection of probability distributions
$\{p_{X|M}(x|m)\}_{m=1}^{M}$ (one for each message $m$) and a decoding
positive operator-valued measure (POVM) $\{\Lambda_{B}^{m}\}_{m=1}^{M}%
$,\footnote{We could allow for a decoding POVM\ to be $\{\Lambda_{B}%
^{m}\}_{m=0}^{M}$, consisting of an extra operator $\Lambda_{B}^{0}=I_{B}%
-\sum_{m=1}^{M}\Lambda_{B}^{m}$, if needed.} such that%
\begin{equation}
\frac{1}{M}\sum_{m=1}^{M}\operatorname{Tr}\{(I_{B}-\Lambda_{B}^{m})\rho
_{B}^{m}\}=\frac{1}{M}\sum_{m=1}^{M}\frac{1}{2}\left\Vert \mathcal{M}%
_{B\rightarrow\hat{M}}(\rho_{B}^{m})-|m\rangle\langle m|_{\hat{M}}\right\Vert
_{1}\leq\varepsilon. \label{eq:decoding-error}%
\end{equation}
We refer to the left-hand side of the above inequality as the \textit{decoding
error}. In the above, $\{|m\rangle_{\hat{M}}\}_{m=1}^{M}$ is an orthonormal
basis, we define the state $\rho_{B}^{m}$ as%
\begin{equation}
\rho_{B}^{m}=\sum_{x}p_{X|M}(x|m)\rho_{B}^{x},
\end{equation}
and the measurement channel $\mathcal{M}_{B\rightarrow\hat{M}}$\ as%
\begin{equation}
\mathcal{M}_{B\rightarrow\hat{M}}(\omega_{B})\equiv\sum_{m}\operatorname{Tr}%
\{\Lambda_{B}^{m}\omega_{B}\}|m\rangle\langle m|_{\hat{M}}.
\end{equation}
The equality in \eqref{eq:decoding-error} follows by direct calculation:%
\begin{align}
&  \left\Vert \mathcal{M}_{B\rightarrow\hat{M}}(\rho_{B}^{m})-|m\rangle\langle
m|_{\hat{M}}\right\Vert _{1}\nonumber\\
&  =\left\Vert \sum_{m^{\prime}}\operatorname{Tr}\{\Lambda_{B}^{m^{\prime}%
}\rho_{B}^{m}\}|m^{\prime}\rangle\langle m^{\prime}|_{\hat{M}}-|m\rangle
\langle m|_{\hat{M}}\right\Vert _{1}\\
&  =\left\Vert \sum_{m^{\prime}\neq m}\operatorname{Tr}\{\Lambda
_{B}^{m^{\prime}}\rho_{B}^{m}\}|m^{\prime}\rangle\langle m^{\prime}|_{\hat{M}%
}-(1-\operatorname{Tr}\{\Lambda_{B}^{m}\rho_{B}^{m}\})|m\rangle\langle
m|_{\hat{M}}\right\Vert _{1}\\
&  =\sum_{m^{\prime}\neq m}\operatorname{Tr}\{\Lambda_{B}^{m^{\prime}}\rho
_{B}^{m}\}+(1-\operatorname{Tr}\{\Lambda_{B}^{m}\rho_{B}^{m}\})\\
&  =2\operatorname{Tr}\{(I_{B}-\Lambda_{B}^{m})\rho_{B}^{m}\}.
\end{align}
For a given channel $\mathcal{N}_{X^{\prime}\rightarrow B}$ and $\varepsilon$,
the one-shot classical capacity is equal to $\log_{2}M_{\operatorname{pub}%
}^{\ast}(\varepsilon)$, where $M_{\operatorname{pub}}^{\ast}(\varepsilon)$ is
the largest $M$ such that \eqref{eq:decoding-error} can be satisfied for a
fixed $\varepsilon$.

One can allow for shared randomness between Alice and Bob before communication
begins, in which case one obtains the one-shot shared randomness assisted
capacity of a cq channel.

\subsection{Lower bound on the one-shot, randomness-assisted classical
capacity}

\label{sec:random-assisted-public-classical}We first consider a one-shot
protocol for randomness assisted, public classical communication in which the
goal is for Alice to use the classical-input quantum-output (cq) channel in
\eqref{eq:cq-channel}\ once to send one of $M$ messages with error probability
no larger than $\varepsilon\in(0,1)$. The next section shows how to
derandomize such that the shared randomness is not needed.

\textit{The main result of this section is that}%
\begin{equation}
\mathit{\ }I_{H}^{\varepsilon-\eta}(X;B)_{\rho}-\log_{2}(4\varepsilon/\eta
^{2})\mathit{,\ }%
\end{equation}
\textit{for all }$\eta\in(0,\varepsilon)$\textit{, is a lower bound on the
}$\varepsilon$\textit{-one-shot randomness-assisted, classical capacity of the
cq channel in \eqref{eq:cq-channel}.} Although this result is already known
from \cite{WR12}, the development in this section is an important building
block for the wiretap channel result in Section~\ref{sec:priv-cap-lower}, and
so we go through it in full detail for the sake of completeness. Also, the
approach given here uses position-based decoding for the cq channel.

Fix a probability distribution $p_{X}$ over the channel input alphabet.
Consider the following classical--classical state:%
\begin{equation}
\rho_{XX^{\prime}}\equiv\sum_{x}p_{X}(x)|x\rangle\langle x|_{X}\otimes
|x\rangle\langle x|_{X^{\prime}},
\end{equation}
which we can think of as representing shared randomness. Let $\rho_{XB}$
denote the following state, which results from sending the $X^{\prime}$ system
of $\rho_{XX^{\prime}}$ through the channel $\mathcal{N}_{X^{\prime
}\rightarrow B}$:%
\begin{equation}
\rho_{XB}\equiv\mathcal{N}_{X^{\prime}\rightarrow B}(\rho_{XX^{\prime}}%
)=\sum_{x}p_{X}(x)|x\rangle\langle x|_{X}\otimes\rho_{B}^{x}.
\end{equation}

The coding scheme works as follows. Let Alice and Bob share $M$ copies of the
state $\rho_{XX^{\prime}}$, so that their shared state is%
\begin{equation}
\rho_{X^{M}X^{\prime M}}\equiv\rho_{X_{1}X_{1}^{\prime}}\otimes\cdots
\otimes\rho_{X_{M}X_{M}^{\prime}}=\rho_{XX^{\prime}}^{\otimes M}.
\end{equation}
Alice has the systems labeled by $X^{\prime}$, and Bob has the systems labeled
by $X$. If Alice would like to communicate message $m$ to Bob, then she simply
sends system $X_{m}^{\prime}$ over the classical--quantum channel. In such a
case, the reduced state for Bob is as follows:%
\begin{equation}
\rho_{X^{M}B}^{m}\equiv\rho_{X_{1}}\otimes\cdots\otimes\rho_{X_{m-1}}%
\otimes\rho_{X_{m+1}}\otimes\cdots\otimes\rho_{X_{M}}\otimes\rho_{X_{m}B}.
\end{equation}
Observe that each state $\rho_{X^{M}B}^{m}$ is related to the first one
$\rho_{X^{M}B}^{1}$ by a permutation $\pi(m)$ of the $X^{M}$ systems:%
\begin{equation}
W_{X^{M}}^{\pi(m)}\rho_{X^{M}B}^{1}W_{X^{M}}^{\pi(m)\dag}=\rho_{X^{M}B}^{m},
\label{eq:perm-inv-state}%
\end{equation}
where $W_{X^{M}}^{\pi(m)}$ is a unitary representation of the permutation
$\pi(m)$.

If Bob has a way of distinguishing the joint state $\rho_{XB}$ from the
product state $\rho_{X}\otimes\rho_{B}$, then with high probability, he will
be able to figure out which message $m$ was communicated. Let $T_{XB}$ denote
a test (measurement operator) satisfying $0\leq T_{XB}\leq I_{XB}$, which we
think of as identifying $\rho_{XB}$ with high probability ($\geq1-\varepsilon
$) and for which the complementary operator $I_{XB}-T_{XB}$\ identifies
$\rho_{X}\otimes\rho_{B}$ with the highest probability subject to the
constraint $\operatorname{Tr}\{T_{XB}\rho_{XB}\}\geq1-\varepsilon$. From such
a test, we form the following measurement operator:%
\begin{equation}
\Gamma_{X^{M}B}^{m}\equiv T_{X_{m}B}\otimes I_{X_{1}}\otimes\cdots\otimes
I_{X_{m-1}}\otimes I_{X_{m+1}}\otimes\cdots\otimes I_{X_{M}},
\end{equation}
which we think of as a test to figure out whether the reduced state on systems
$X_{m}B$ is $\rho_{XB}$ or $\rho_{X}\otimes\rho_{B}$. Observe that each
message operator $\Gamma_{X^{M}B}^{m}$ is related to the first one
$\Gamma_{X^{M}B}^{1}$ by a permutation $\pi(m)$ of the $X^{M}$ systems:%
\begin{equation}
W_{X^{M}}^{\pi(m)}\Gamma_{X^{M}B}^{1}W_{X^{M}}^{\pi(m)\dag}=\Gamma_{X^{M}%
B}^{m}.
\end{equation}
If message $m$ is transmitted and the measurement operator $\Gamma_{X^{M}%
B}^{m}$ acts, then the probability of it accepting is%
\begin{equation}
\operatorname{Tr}\{\Gamma_{X^{M}B}^{m}\rho_{X^{M}B}^{m}\}=\operatorname{Tr}%
\{T_{XB}\rho_{XB}\}.
\end{equation}
If however the measurement operator $\Gamma_{X^{M}B}^{m^{\prime}}$ acts, where
$m^{\prime}\neq m$, then the probability of it accepting is%
\begin{equation}
\operatorname{Tr}\{\Gamma_{X^{M}B}^{m^{\prime}}\rho_{X^{M}B}^{m}%
\}=\operatorname{Tr}\{T_{XB}[\rho_{X}\otimes\rho_{B}]\}.
\end{equation}
From these measurement operators, we then form a square-root measurement as
follows:%
\begin{equation}
\Lambda_{X^{M}B}^{m}\equiv\left(  \sum_{m^{\prime}=1}^{M}\Gamma_{X^{M}%
B}^{m^{\prime}}\right)  ^{-1/2}\Gamma_{X^{M}B}^{m}\left(  \sum_{m^{\prime}%
=1}^{M}\Gamma_{X^{M}B}^{m^{\prime}}\right)  ^{-1/2}.
\end{equation}
Again, each message operator $\Lambda_{X^{M}B}^{m}$ is related to the first
one $\Lambda_{X^{M}B}^{1}$ by a permutation of the $X^{M}$ systems:%
\begin{align}
&  W_{X^{M}}^{\pi(m)}\Lambda_{X^{M}B}^{1}W_{X^{M}}^{\pi(m)\dag}\nonumber\\
&  =W_{X^{M}}^{\pi(m)}\left(  \sum_{m^{\prime}=1}^{M}\Gamma_{X^{M}%
B}^{m^{\prime}}\right)  ^{-1/2}\Gamma_{X^{M}B}^{1}\left(  \sum_{m^{\prime}%
=1}^{M}\Gamma_{X^{M}B}^{m^{\prime}}\right)  ^{-1/2}W_{X^{M}}^{\pi(m)\dag
}\label{eq:perm-inv-meas-1}\\
&  =W_{X^{M}}^{\pi(m)}\left(  \sum_{m^{\prime}=1}^{M}\Gamma_{X^{M}%
B}^{m^{\prime}}\right)  ^{-1/2}W_{X^{M}}^{\pi(m)\dag}W_{X^{M}}^{\pi(m)}%
\Gamma_{X^{M}B}^{1}W_{X^{M}}^{\pi(m)\dag}W_{X^{M}}^{\pi(m)}\left(
\sum_{m^{\prime}=1}^{M}\Gamma_{X^{M}B}^{m^{\prime}}\right)  ^{-1/2}W_{X^{M}%
}^{\pi(m)\dag}\\
&  =\left(  \sum_{m^{\prime}=1}^{M}W_{X^{M}}^{\pi(m)}\Gamma_{X^{M}%
B}^{m^{\prime}}W_{X^{M}}^{\pi(m)\dag}\right)  ^{-1/2}\Gamma_{X^{M}B}%
^{m}\left(  \sum_{m^{\prime}=1}^{M}W_{X^{M}}^{\pi(m)}\Gamma_{X^{M}%
B}^{m^{\prime}}W_{X^{M}}^{\pi(m)\dag}\right)  ^{-1/2}\\
&  =\left(  \sum_{m^{\prime}=1}^{M}\Gamma_{X^{M}B}^{m^{\prime}}\right)
^{-1/2}\Gamma_{X^{M}B}^{m}\left(  \sum_{m^{\prime}=1}^{M}\Gamma_{X^{M}%
B}^{m^{\prime}}\right)  ^{-1/2}. \label{eq:perm-inv-meas-4}%
\end{align}

This is called the position-based decoder and was analyzed in \cite{AJW17}%
\ for the case of entanglement-assisted communication. The error probability
under this coding scheme is as follows for each message $m$:%
\begin{equation}
\operatorname{Tr}\{(I_{R^{M}B}-\Lambda_{R^{M}B}^{m})\rho_{R^{M}B}^{m}\}.
\end{equation}
The error probability is in fact the same for each message, due to the
observations in \eqref{eq:perm-inv-state}\ and
\eqref{eq:perm-inv-meas-1}--\eqref{eq:perm-inv-meas-4}:%
\begin{align}
\operatorname{Tr}\{(I_{R^{M}B}-\Lambda_{R^{M}B}^{1})\rho_{R^{M}B}^{1}\}  &
=\operatorname{Tr}\{(I_{R^{M}B}-\Lambda_{R^{M}B}^{1})W_{X^{M}}^{\pi(m)\dag
}W_{X^{M}}^{\pi(m)}\rho_{R^{M}B}^{1}W_{X^{M}}^{\pi(m)\dag}W_{X^{M}}^{\pi
(m)}\}\\
&  =\operatorname{Tr}\{(I_{R^{M}B}-W_{X^{M}}^{\pi(m)}\Lambda_{R^{M}B}%
^{1}W_{X^{M}}^{\pi(m)\dag})W_{X^{M}}^{\pi(m)}\rho_{R^{M}B}^{1}W_{X^{M}}%
^{\pi(m)\dag}\}\\
&  =\operatorname{Tr}\{(I_{R^{M}B}-\Lambda_{R^{M}B}^{m})\rho_{R^{M}B}^{m}\}.
\end{align}
So let us analyze the error probability for the first message $m=1$. Applying
the Hayashi-Nagaoka operator inequality in \eqref{eq:HN-ineq}, with
$S=\Gamma_{X^{M}B}^{1}$, $T=\sum_{m^{\prime}\neq1}\Gamma_{X^{M}B}^{m^{\prime}%
}$, $c_{\operatorname{I}}\equiv1+c$, and $c_{\operatorname{II}}=2+c+c^{-1}$
for $c>0$, we find that this error probability can be bounded from above as%
\begin{align}
&  \operatorname{Tr}\{(I_{R^{M}B}-\Lambda_{R^{M}B}^{1})\rho_{R^{M}B}%
^{1}\}\nonumber\\
&  \leq c_{\operatorname{I}}\operatorname{Tr}\{(I_{X^{M}B}-\Gamma_{X^{M}B}%
^{1})\rho_{X^{M}B}^{1}\}+c_{\operatorname{II}}\sum_{m^{\prime}\neq
1}\operatorname{Tr}\{\Gamma_{X^{M}B}^{m^{\prime}}\rho_{X^{M}B}^{1}\}\\
&  =c_{\operatorname{I}}\operatorname{Tr}\{(I_{XB}-T_{XB})\rho_{XB}%
\}+c_{\operatorname{II}}\sum_{m^{\prime}\neq1}\operatorname{Tr}\{T_{RB}\left[
\rho_{X}\otimes\rho_{B}\right]  \}\\
&  =c_{\operatorname{I}}\operatorname{Tr}\{(I_{XB}-T_{XB})\rho_{XB}%
\}+c_{\operatorname{II}}(M-1)\operatorname{Tr}\{T_{XB}\left[  \rho_{X}%
\otimes\rho_{B}\right]  \}. \label{eq:EA-err-prob}%
\end{align}

Consider the hypothesis testing mutual information:%
\begin{equation}
I_{H}^{\varepsilon}(X;B)_{\rho}\equiv D_{H}^{\varepsilon}(\rho_{XB}\Vert
\rho_{X}\otimes\rho_{B}),
\end{equation}
where%
\begin{equation}
D_{H}^{\varepsilon}(\rho\Vert\sigma)\equiv-\log_{2}\inf_{\Lambda}\left\{
\operatorname{Tr}\{\Lambda\sigma\}:0\leq\Lambda\leq I\wedge\operatorname{Tr}%
\{\Lambda\rho\}\geq1-\varepsilon\right\}  .
\end{equation}
Take the test $T_{XB}$ in Bob's decoder to be $\Upsilon_{XB}^{\ast}$, where
$\Upsilon_{XB}^{\ast}$ is the optimal measurement operator\ for $I_{H}%
^{\varepsilon-\eta}(X;B)_{\rho}$ for $\eta\in(0,\varepsilon)$. Then the error
probability is bounded as%
\begin{align}
&  \operatorname{Tr}\{(I_{X^{M}B}-\Lambda_{X^{M}B}^{1})\rho_{X^{M}B}%
^{1}\}\nonumber\\
&  \leq c_{\operatorname{I}}\operatorname{Tr}\{(I_{XB}-\Upsilon_{XB}^{\ast
})\rho_{XB}\}+c_{\operatorname{II}}M\operatorname{Tr}\{\Upsilon_{XB}^{\ast
}\left[  \rho_{X}\otimes\rho_{B}\right]  \}\\
&  \leq c_{\operatorname{I}}\left(  \varepsilon-\eta\right)
+c_{\operatorname{II}}M2^{-I_{H}^{\varepsilon-\eta}(X;B)_{\rho}}.
\end{align}
Now pick $c=\eta/(2\varepsilon-\eta)$ and we get that the last line above
$=\varepsilon$, for%
\begin{equation}
\log_{2}M=I_{H}^{\varepsilon-\eta}(X;B)_{\rho}-\log_{2}(4\varepsilon/\eta
^{2}).
\end{equation}
Indeed, consider that we would like to have $c$ such that%
\begin{equation}
\varepsilon=c_{\operatorname{I}}\left(  \varepsilon-\eta\right)
+c_{\operatorname{II}}M2^{-I_{H}^{\varepsilon-\eta}(X;B)_{\rho}}.
\end{equation}
Rewriting this, we find that $M$ should satisfy%
\begin{equation}
\log_{2}M=I_{H}^{\varepsilon-\eta}(X;B)_{\rho}+\log_{2}\!\left(
\frac{\varepsilon-c_{\operatorname{I}}\left(  \varepsilon-\eta\right)
}{c_{\operatorname{II}}}\right)  .
\end{equation}
Picking $c=\eta/(2\varepsilon-\eta)$ then implies (after some algebra)\ that%
\begin{equation}
\frac{\varepsilon-c_{\operatorname{I}}\left(  \varepsilon-\eta\right)
}{c_{\operatorname{II}}}=\frac{\eta^{2}}{4\varepsilon}.
\end{equation}

So the quantity $I_{H}^{\varepsilon-\eta}(X;B)_{\rho}-\log_{2}(4\varepsilon
/\eta^{2})$\ represents a lower bound on the $\varepsilon$-one-shot
randomness-assisted, classical capacity of the cq channel in
\eqref{eq:cq-channel}. The bound holds for both average error probability and
maximal error probability, and this coincidence is due to the protocol having
the assistance of shared randomness.

\subsection{Lower bound on the one-shot classical capacity}

\label{sec:derandomize-public-classical-code}Now I show how to derandomize the
above randomness-assisted code. The main result of this section is the
following lower bound on the $\varepsilon$-one shot classical capacity of the
cq channel in \eqref{eq:cq-channel}, holding for\ all\ $\eta\in(0,\varepsilon
)$:%
\begin{equation}
\log_{2}M_{\operatorname{pub}}^{\ast}(\varepsilon)\geq I_{H}^{\varepsilon
-\eta}(X;B)_{\rho}-\log_{2}(4\varepsilon/\eta^{2}).
\end{equation}
Again, note that although this result is already known from \cite{WR12}, the
development in this section is an important building block for the wiretap
channel result in Section~\ref{sec:priv-cap-lower}. As stated previously, the
approach given here uses position-based decoding for the cq channel.

By the reasoning from the previous section, we have the following bound on the
average error probability for a randomness-assisted code:%
\begin{equation}
\frac{1}{M}\sum_{m=1}^{M}\operatorname{Tr}\{(I_{X^{M}B}-\Lambda_{X^{M}B}%
^{m})\rho_{X^{M}B}^{m}\}\leq\varepsilon, \label{eq:avg-error-bound}%
\end{equation}
if%
\begin{equation}
\log_{2}M=I_{H}^{\varepsilon-\eta}(X;B)_{\rho}-\log_{2}(4\varepsilon/\eta
^{2}).
\end{equation}
So let us analyze the expression $\operatorname{Tr}\{(I_{X^{M}B}%
-\Lambda_{X^{M}B}^{m})\rho_{X^{M}B}^{m}\}$. By definition, it follows that%
\begin{equation}
\rho_{X^{M}B}^{m}=\sum_{x_{1},\ldots,x_{M}}p_{X}(x_{1})\cdots p_{X}%
(x_{M})|x_{1},\ldots,x_{M}\rangle\langle x_{1},\ldots,x_{M}|_{X_{1}\cdots
X_{M}}\otimes\rho_{B}^{x_{m}}. \label{eq:unraveled-state-m}%
\end{equation}
Also, recall that $\Upsilon_{XB}^{\ast}$ is optimal for $I_{H}^{\varepsilon
-\eta}(X;B)_{\rho}$, which implies that%
\begin{align}
\operatorname{Tr}\{\Upsilon_{XB}^{\ast}\rho_{XB}\}  &  \geq1-\left(
\varepsilon-\eta\right)  ,\\
\operatorname{Tr}\{\Upsilon_{XB}^{\ast}\left[  \rho_{X}\otimes\rho_{B}\right]
\}  &  =2^{-I_{H}^{\varepsilon-\eta}(X;B)_{\rho}}.
\end{align}
But consider that%
\begin{align}
\operatorname{Tr}\{\Upsilon_{XB}^{\ast}\rho_{XB}\}  &  =\operatorname{Tr}%
\left\{  \Upsilon_{XB}^{\ast}\sum_{x}p_{X}(x)|x\rangle\langle x|_{X}%
\otimes\rho_{B}^{x}\right\} \\
&  =\sum_{x}p_{X}(x)\operatorname{Tr}\left\{  \langle x|_{X}\Upsilon
_{XB}^{\ast}|x\rangle_{X}\rho_{B}^{x}\right\} \\
&  =\sum_{x}p_{X}(x)\operatorname{Tr}\left\{  Q_{B}^{x}\rho_{B}^{x}\right\}  ,
\end{align}
where we define%
\begin{equation}
Q_{B}^{x}\equiv\langle x|_{X}\Upsilon_{XB}^{\ast}|x\rangle_{X}.
\label{eq:optimal-Q-x}%
\end{equation}
Similarly, we have that%
\begin{align}
\operatorname{Tr}\{\Upsilon_{XB}^{\ast}\left[  \rho_{X}\otimes\rho_{B}\right]
\}  &  =\operatorname{Tr}\left\{  \Upsilon_{XB}^{\ast}\sum_{x}p_{X}%
(x)|x\rangle\langle x|_{X}\otimes\rho_{B}\right\} \\
&  =\sum_{x}p_{X}(x)\operatorname{Tr}\left\{  \langle x|_{X}\Upsilon
_{XB}^{\ast}|x\rangle_{X}\rho_{B}\right\} \\
&  =\sum_{x}p_{X}(x)\operatorname{Tr}\left\{  Q_{B}^{x}\rho_{B}\right\}  .
\end{align}
This demonstrates that it suffices to take the optimal measurement operator
$\Upsilon_{XB}^{\ast}$ to be $\sum_{x}|x\rangle\langle x|_{X}\otimes Q_{B}%
^{x}$, with $Q_{B}^{x}$ defined as in \eqref{eq:optimal-Q-x}, and this will
achieve the same optimal value as $\Upsilon_{XB}^{\ast}$ does.

Taking $\Upsilon_{XB}^{\ast}$ as such, now consider that%
\begin{align}
\Gamma_{X^{M}B}^{m}  &  =\Upsilon_{X_{m}B}^{\ast}\otimes I_{X_{1}}%
\otimes\cdots\otimes I_{X_{m-1}}\otimes I_{X_{m+1}}\otimes\cdots\otimes
I_{X_{M}}\\
&  =\sum_{x_{m}}|x_{m}\rangle\langle x_{m}|_{X_{m}}\otimes Q_{B}^{x_{m}%
}\otimes I_{X_{1}}\otimes\cdots\otimes I_{X_{m-1}}\otimes I_{X_{m+1}}%
\otimes\cdots\otimes I_{X_{M}}\\
&  =\sum_{x_{1},\ldots,x_{M}}|x_{1},\ldots,x_{M}\rangle\langle x_{1}%
,\ldots,x_{M}|_{X^{M}}\otimes Q_{B}^{x_{m}}.
\end{align}
Then this implies that%
\begin{align}
\left(  \sum_{m^{\prime}=1}^{M}\Gamma_{X^{M}B}^{m^{\prime}}\right)  ^{-1/2}
&  =\left(  \sum_{m^{\prime}=1}^{M}\sum_{x_{1},\ldots,x_{M}}|x_{1}%
,\ldots,x_{M}\rangle\langle x_{1},\ldots,x_{M}|_{X^{M}}\otimes Q_{B}%
^{x_{m^{\prime}}}\right)  ^{-1/2}\\
&  =\left(  \sum_{x_{1},\ldots,x_{M}}|x_{1},\ldots,x_{M}\rangle\langle
x_{1},\ldots,x_{M}|_{X^{M}}\otimes\sum_{m^{\prime}=1}^{M}Q_{B}^{x_{m^{\prime}%
}}\right)  ^{-1/2}\\
&  =\sum_{x_{1},\ldots,x_{M}}|x_{1},\ldots,x_{M}\rangle\langle x_{1}%
,\ldots,x_{M}|_{X^{M}}\otimes\left(  \sum_{m^{\prime}=1}^{M}Q_{B}%
^{x_{m^{\prime}}}\right)  ^{-1/2},
\end{align}
so that%
\begin{align}
\Lambda_{X^{M}B}^{m}  &  =\left(  \sum_{m^{\prime}=1}^{M}\Gamma_{X^{M}%
B}^{m^{\prime}}\right)  ^{-1/2}\Gamma_{X^{M}B}^{m}\left(  \sum_{m^{\prime}%
=1}^{M}\Gamma_{X^{M}B}^{m^{\prime}}\right)  ^{-1/2}\\
&  =\sum_{x_{1},\ldots,x_{M}}|x_{1},\ldots,x_{M}\rangle\langle x_{1}%
,\ldots,x_{M}|_{X^{M}}\otimes\Omega_{B}^{x_{m}}, \label{eq:unraveled-meas-m}%
\end{align}
where%
\begin{equation}
\Omega_{B}^{x_{m}}\equiv\left(  \sum_{m^{\prime}=1}^{M}Q_{B}^{x_{m^{\prime}}%
}\right)  ^{-1/2}Q_{B}^{x_{m}}\left(  \sum_{m^{\prime}=1}^{M}Q_{B}%
^{x_{m^{\prime}}}\right)  ^{-1/2}.
\end{equation}
Observe that $\{\Omega_{B}^{x_{m}}\}_{m=1}^{M}$ is a POVM on the support of
$\sum_{m^{\prime}=1}^{M}Q_{B}^{x_{m^{\prime}}}$ and can be completed to a
POVM\ on the full space by adding $\Omega_{B}^{x_{0}}\equiv I_{B}%
-\sum_{m^{\prime}=1}^{M}Q_{B}^{x_{m^{\prime}}}$. By employing
\eqref{eq:unraveled-state-m} and \eqref{eq:unraveled-meas-m}, we find that%
\begin{equation}
\operatorname{Tr}\{(I_{X^{M}B}-\Lambda_{X^{M}B}^{m})\rho_{X^{M}B}^{m}%
\}=\sum_{x_{1},\ldots,x_{M}}p_{X}(x_{1})\cdots p_{X}(x_{M})\operatorname{Tr}%
\{(I_{B}-\Omega_{B}^{x_{m}})\rho_{B}^{x_{m}}\},
\end{equation}
so that the average error probability is as follows:%
\begin{align}
&  \frac{1}{M}\sum_{m=1}^{M}\operatorname{Tr}\{(I_{X^{M}B}-\Lambda_{X^{M}%
B}^{m})\rho_{X^{M}B}^{m}\}\nonumber\\
&  =\frac{1}{M}\sum_{m=1}^{M}\sum_{x_{1},\ldots,x_{M}}p_{X}(x_{1})\cdots
p_{X}(x_{M})\operatorname{Tr}\{(I_{B}-\Omega_{B}^{x_{m}})\rho_{B}^{x_{m}}\}\\
&  =\sum_{x_{1},\ldots,x_{M}}p_{X}(x_{1})\cdots p_{X}(x_{M})\left[  \frac
{1}{M}\sum_{m=1}^{M}\operatorname{Tr}\{(I_{B}-\Omega_{B}^{x_{m}})\rho
_{B}^{x_{m}}\}\right]  .
\end{align}
The last line above is the same as the usual \textquotedblleft Shannon
trick\textquotedblright\ of exchanging the average over the messages with the
expectation over a random choice of code. By employing the bound in
\eqref{eq:avg-error-bound}, we find that%
\begin{equation}
\sum_{x_{1},\ldots,x_{M}}p_{X}(x_{1})\cdots p_{X}(x_{M})\left[  \frac{1}%
{M}\sum_{m=1}^{M}\operatorname{Tr}\{(I_{B}-\Omega_{B}^{x_{m}})\rho_{B}^{x_{m}%
}\}\right]  \leq\varepsilon.
\end{equation}
Then there exists a particular set of values of $x_{1}$, \ldots, $x_{M}$ such
that%
\begin{equation}
\frac{1}{M}\sum_{m=1}^{M}\operatorname{Tr}\{(I_{B}-\Omega_{B}^{x_{m}})\rho
_{B}^{x_{m}}\}\leq\varepsilon.
\end{equation}
This sequence $x_{1}$, \ldots, $x_{M}$ constitutes the codewords and
$\{\Omega_{B}^{x_{m}}\}_{m=1}^{M}$ is a corresponding POVM that can be used as
a decoder. The number of bits that the code can transmit is equal to $\log
_{2}M=I_{H}^{\varepsilon-\eta}(X;B)_{\rho}-\log_{2}(4\varepsilon/\eta^{2})$.
No shared randomness is required for this code (it is now derandomized).

\begin{remark}
To achieve maximal error probability $2\varepsilon$, one can remove the worst
half of the codewords, and then a lower bound on the achievable number of bits
is%
\begin{equation}
I_{H}^{\varepsilon-\eta}(X;B)_{\rho}-\log_{2}2-\log_{2}(4\varepsilon/\eta
^{2})=I_{H}^{\varepsilon-\eta}(X;B)_{\rho}-\log_{2}(8\varepsilon/\eta^{2}).
\end{equation}

\end{remark}

\section{Private classical communication\label{sec:priv-comm}}

\subsection{Definition of the one-shot private classical capacity}

Now suppose that Alice, Bob, and Eve are connected by a classical-input
quantum-quantum-output (cqq) channel of the following form:%
\begin{equation}
x\rightarrow\rho_{BE}^{x},
\end{equation}
where Bob has system $B$ and Eve system $E$. The fully quantum version of this
channel is as follows:%
\begin{equation}
\mathcal{N}_{X^{\prime}\rightarrow BE}(\sigma_{X^{\prime}})=\sum_{x}\langle
x|_{X^{\prime}}\sigma_{X^{\prime}}|x\rangle_{X^{\prime}}\rho_{BE}^{x},
\label{eq:wiretap-fully-quantum}%
\end{equation}
where $\{|x\rangle_{X^{\prime}}\}_{x}$ is some orthonormal basis.

We define the one-shot private classical capacity in the following way. Let
$M\in\mathbb{N}$ and $\varepsilon\in(0,1)$. An $(M,\varepsilon)$ private
communication code consists of a collection of probability distributions
$\{p_{X|M}(x|m)\}_{m=1}^{M}$ (one for each message $m$) and a decoding POVM
$\{\Lambda_{B}^{m}\}_{m=1}^{M}$, such that%
\begin{equation}
\frac{1}{M}\sum_{m=1}^{M}\frac{1}{2}\left\Vert \mathcal{M}_{B\rightarrow
\hat{M}}(\rho_{BE}^{m})-|m\rangle\langle m|_{\hat{M}}\otimes\sigma
_{E}\right\Vert _{1}\leq\varepsilon. \label{eq:avg-rel-sec}%
\end{equation}
We refer to the left-hand side of the above inequality as the \textit{privacy
error}. In the above, $\{|m\rangle_{\hat{M}}\}_{m=1}^{M}$ is an orthonormal
basis, the state $\sigma_{E}$ can be any state, we define the state $\rho
_{BE}^{m}$ as%
\begin{equation}
\rho_{BE}^{m}=\sum_{x}p_{X|M}(x|m)\rho_{BE}^{x},
\end{equation}
and the measurement channel $\mathcal{M}_{B\rightarrow\hat{M}}$\ as%
\begin{equation}
\mathcal{M}_{B\rightarrow\hat{M}}(\omega_{B})\equiv\sum_{m}\operatorname{Tr}%
\{\Lambda_{B}^{m}\omega_{B}\}|m\rangle\langle m|_{\hat{M}}.
\end{equation}
For a given channel $\mathcal{N}_{X^{\prime}\rightarrow BE}$ and $\varepsilon
$, the one-shot private classical capacity is equal to $\log_{2}%
M_{\operatorname{priv}}^{\ast}(\varepsilon)$, where $M_{\operatorname{priv}%
}^{\ast}(\varepsilon)$ is the largest $M$ such that \eqref{eq:avg-rel-sec} can
be satisfied for a fixed $\varepsilon$.

The condition in \eqref{eq:avg-rel-sec} combines the reliable decoding and
security conditions into a single average error criterion. We can see how it
represents a generalization of the error criterion in
\eqref{eq:decoding-error}, which was for public classical communication over a
cq channel. One could have a different definition of one-shot private
capacity, in which there are two separate criteria, but the approach above
will be beneficial for our purposes. In any case, a code satisfying
\eqref{eq:avg-rel-sec}~satisfies the two separate criteria as well, as is
easily seen by invoking the monotonicity of trace distance.\footnote{Indeed,
starting with \eqref{eq:avg-rel-sec}\ and applying monotonicity of trace
distance under partial trace of the $E$ system, we get that $\frac{1}{M}%
\sum_{m=1}^{M}\frac{1}{2}\left\Vert \mathcal{M}_{B\rightarrow\hat{M}}(\rho
_{B}^{m})-|m\rangle\langle m|_{\hat{M}}\right\Vert _{1}\leq\varepsilon$.
Recalling \eqref{eq:decoding-error}, we can interpret this as asserting that
the decoding error probability does not exceed $\varepsilon$. Doing the same
but considering a partial trace over the $B$ system implies that $\frac{1}%
{M}\sum_{m=1}^{M}\frac{1}{2}\left\Vert \rho_{E}^{m}-\sigma_{E}\right\Vert
_{1}\leq\varepsilon$, which is a security criterion. So we get that the
conventional two separate criteria are satisfied if a code satisfies the
single privacy error criterion in \eqref{eq:avg-rel-sec}.} Having a single
error criterion for private capacity is the same as the approach taken in
\cite{HHHO09} and \cite{WTB16}, and in the latter paper, it was shown that
notions of asymptotic private capacity are equivalent when using either a
single error criterion or two separate error criteria.

\subsection{Lower bound on the one-shot private classical
capacity\label{sec:priv-cap-lower}}

The main result of this section is the following lower bound on the
$\varepsilon$-one shot private capacity of a cq wiretap channel, holding for
all $\varepsilon_{1},\varepsilon_{2}\in(0,1)$, such that $\varepsilon
_{1}+\sqrt{\varepsilon_{2}}\in(0,1)$, and $\eta_{1}\in(0,\varepsilon_{1})$ and
$\eta_{2}\in(0,\sqrt{\varepsilon_{2}})$:%
\begin{equation}
\log_{2}M_{\operatorname{priv}}^{\ast}(\varepsilon_{1}+\sqrt{\varepsilon_{2}%
})\geq I_{H}^{\varepsilon_{1}-\eta_{1}}(X;B)_{\rho}-\widetilde{I}_{\max
}^{\sqrt{\varepsilon_{2}}-\eta_{2}}(E;X)_{\rho}-\log_{2}(4\varepsilon_{1}%
/\eta_{1}^{2})-2\log_{2}\!\left(  1/\eta_{2}\right)  .
\end{equation}

To begin with, we allow Alice, Bob, and Eve shared randomness of the following
form:%
\begin{equation}
\rho_{XX^{\prime}X^{\prime\prime}}\equiv\sum_{x}p_{X}(x)|x\rangle\langle
x|_{X}\otimes|x\rangle\langle x|_{X^{\prime}}\otimes|x\rangle\langle
x|_{X^{\prime\prime}}, \label{eq:randomness-state-ABE}%
\end{equation}
where Bob has the $X$ system, Alice the $X^{\prime}$ system, and Eve the
$X^{\prime\prime}$ system. It is natural here to let Eve share the randomness
as well, and this amounts to giving her knowledge of the code to be used. Let
$\rho_{XX^{\prime\prime}BE}$ denote the state resulting from sending the
$X^{\prime}$ system through the channel $\mathcal{N}_{X^{\prime}\rightarrow
BE}$ in\ \eqref{eq:wiretap-fully-quantum}:%
\begin{equation}
\rho_{XX^{\prime\prime}BE}\equiv\sum_{x}p_{X}(x)|x\rangle\langle x|_{X}%
\otimes\rho_{BE}^{x}\otimes|x\rangle\langle x|_{X^{\prime\prime}}.
\end{equation}

The coding scheme that Alice and Bob use is as follows. There is the message
$m\in\{1,\ldots,M\}$ and a local key $k\in\{1,\ldots,K\}$. The local key $k$
represents local, uniform randomness that Alice has, but which is not
accessible to Bob or Eve. We assume that Alice, Bob, and Eve share $MK$ copies
of the state in \eqref{eq:randomness-state-ABE} before communication begins,
and we denote this state as%
\begin{equation}
\rho_{X^{MK}X^{\prime MK}X^{\prime\prime MK}}=\rho_{X_{1,1}X_{1,1}^{\prime
}X_{1,1}^{\prime\prime}}\otimes\cdots\otimes\rho_{X_{M,K}X_{M,K}^{\prime
}X_{M,K}^{\prime\prime}}=\rho_{XX^{\prime}X^{\prime\prime}}^{\otimes MK}.
\end{equation}
To send the message $m$, Alice picks $k$ uniformly at random from the set
$\{1,\ldots,K\}$. She then sends the $(m,k)$th $X^{\prime}$ system through the
channel $\mathcal{N}_{X^{\prime}\rightarrow BE}$. Thus, when $m$ and $k$ are
chosen, the reduced state on Bob and Eve's systems is%
\begin{equation}
\rho_{X^{MK}X^{\prime\prime MK}BE}^{m,k}=\rho_{X_{1,1}X_{1,1}^{\prime\prime}%
}\otimes\cdots\otimes\rho_{X_{m,k-1}X_{m,k-1}^{\prime\prime}}\otimes
\rho_{X_{m,k}X_{m,k}^{\prime\prime}BE}\otimes\rho_{X_{m,k+1}X_{m,k+1}%
^{\prime\prime}}\otimes\cdots\otimes\rho_{X_{M,K}X_{M,K}^{\prime\prime}},
\end{equation}
and the state of Bob's systems is%
\begin{equation}
\rho_{X^{MK}B}^{m,k}=\rho_{X_{1,1}}\otimes\cdots\otimes\rho_{X_{m,k-1}}%
\otimes\rho_{X_{m,k}B}\otimes\rho_{X_{m,k+1}}\otimes\cdots\otimes\rho
_{X_{M,K}}.
\end{equation}
For Bob to decode, he uses the position-based decoder to decode both the
message $m$ and the local key $k$. Let $\{\Lambda_{X^{M,K}B}^{m,k}\}_{m,k}$
denote his decoding POVM. By the reasoning from
Section~\ref{sec:random-assisted-public-classical}, as long as%
\begin{equation}
\log_{2}MK=I_{H}^{\varepsilon_{1}-\eta_{1}}(X;B)_{\rho}-\log_{2}%
(4\varepsilon_{1}/\eta_{1}^{2}), \label{eq:message-key-size}%
\end{equation}
where $\varepsilon_{1}\in(0,1)$ and $\eta_{1}\in(0,\varepsilon_{1})$, then we
have the following bound holding for all $m,k$:%
\begin{equation}
\operatorname{Tr}\{(I_{X^{M,K}B}-\Lambda_{X^{MK}B}^{m,k})\rho_{X^{MK}B}%
^{m,k}\}\leq\varepsilon_{1}, \label{eq:avg-exp-dec-err-private-code}%
\end{equation}
where $\Lambda_{X^{MK}B}^{m,k}$ is defined as in
Sections~\ref{sec:random-assisted-public-classical} and
\ref{sec:derandomize-public-classical-code}. By the reasoning from
Section~\ref{sec:derandomize-public-classical-code}, we can also write
\eqref{eq:avg-exp-dec-err-private-code} as%
\begin{equation}
\sum_{x_{1,1},\ldots,x_{M,K}}p_{X}(x_{1,1})\cdots p_{X}(x_{M,K})\left[
\frac{1}{MK}\sum_{m=1}^{M}\sum_{k=1}^{K}\operatorname{Tr}\{(I_{B}-\Omega
_{B}^{x_{m,k}})\rho_{B}^{x_{m,k}}\}\right]  \leq\varepsilon_{1},
\label{eq:avg-exp-dec-err-private-code-rewrite}%
\end{equation}
with $\Omega_{B}^{x_{m,k}}$ defined as in
Section~\ref{sec:derandomize-public-classical-code}. Define the following
measurement channels:%
\begin{align}
\mathcal{M}_{B\rightarrow\hat{M}}(\omega_{B})  &  \equiv\sum_{m,k}%
\operatorname{Tr}\{\Omega_{B}^{x_{m,k}}\omega_{B}\}|m\rangle\langle
m|_{\hat{M}},\\
\mathcal{M}_{B\rightarrow\hat{M}\hat{K}}^{\prime}(\omega_{B})  &  \equiv
\sum_{m,k}\operatorname{Tr}\{\Omega_{B}^{x_{m,k}}\omega_{B}\}|m\rangle\langle
m|_{\hat{M}}\otimes|k\rangle\langle k|_{\hat{K}},
\end{align}
with it being clear that $\operatorname{Tr}_{\hat{K}}\circ\mathcal{M}%
_{B\rightarrow\hat{M}\hat{K}}^{\prime}=\mathcal{M}_{B\rightarrow\hat{M}}$.
Consider that%
\begin{align}
&  \frac{1}{2}\left\Vert \mathcal{M}_{B\rightarrow\hat{M}\hat{K}}^{\prime
}\left(  \rho_{B}^{x_{m,k}}\right)  -|m\rangle\langle m|_{\hat{M}}%
\otimes|k\rangle\langle k|_{\hat{K}}\right\Vert _{1}\nonumber\\
&  =\frac{1}{2}\left\Vert \sum_{m^{\prime},k^{\prime}}\operatorname{Tr}%
\{\Omega_{B}^{x_{m^{\prime},k^{\prime}}}\rho_{B}^{x_{m,k}}\}|m^{\prime}%
\rangle\langle m^{\prime}|_{\hat{M}}\otimes|k^{\prime}\rangle\langle
k^{\prime}|_{\hat{K}}-|m\rangle\langle m|_{\hat{M}}\otimes|k\rangle\langle
k|_{\hat{K}}\right\Vert _{1}\\
&  =\frac{1}{2}\left\Vert \sum_{(m^{\prime},k^{\prime})\neq(m,k)}%
\operatorname{Tr}\{\Omega_{B}^{x_{m^{\prime},k^{\prime}}}\rho_{B}^{x_{m,k}%
}\}|m^{\prime}\rangle\langle m^{\prime}|_{\hat{M}}\otimes|k^{\prime}%
\rangle\langle k^{\prime}|_{\hat{K}}-(1-\operatorname{Tr}\{\Omega_{B}%
^{x_{m,k}}\rho_{B}^{x_{m,k}}\})|m\rangle\langle m|_{\hat{M}}\otimes
|k\rangle\langle k|_{\hat{K}}\right\Vert _{1}\\
&  =1-\operatorname{Tr}\{\Omega_{B}^{x_{m,k}}\rho_{B}^{x_{m,k}}\}\\
&  =\operatorname{Tr}\{(I_{B}-\Omega_{B}^{x_{m,k}})\rho_{B}^{x_{m,k}}\}.
\end{align}
Now averaging the above quantity over $m$, $k$, and $x_{1,1}$, \ldots,
$x_{M,K}$, and applying the condition in
\eqref{eq:avg-exp-dec-err-private-code-rewrite}, we get that%
\begin{equation}
\sum_{x_{1,1},\ldots,x_{M,K}}p_{X}(x_{1,1})\cdots p_{X}(x_{M,K})\left[
\frac{1}{MK}\sum_{m,k}\frac{1}{2}\left\Vert \mathcal{M}_{B\rightarrow\hat
{M}\hat{K}}^{\prime}\left(  \rho_{B}^{x_{m,k}}\right)  -|m\rangle\langle
m|_{\hat{M}}\otimes|k\rangle\langle k|_{\hat{K}}\right\Vert _{1}\right]
\leq\varepsilon_{1}. \label{eq:decod-priv-almost-there}%
\end{equation}
Applying convexity of the trace distance to bring the average over $k$ inside
and monotonicity with respect to partial trace over system $\hat{K}$\ to the
left-hand side of \eqref{eq:decod-priv-almost-there}, we find that%
\begin{multline}
\sum_{x_{1,1},\ldots,x_{M,K}}p_{X}(x_{1,1})\cdots p_{X}(x_{M,K})\left[
\frac{1}{M}\sum_{m=1}^{M}\frac{1}{2}\left\Vert (\operatorname{Tr}_{\hat{K}%
}\circ\mathcal{M}_{B\rightarrow\hat{M}\hat{K}}^{\prime})\left(  \frac{1}%
{K}\sum_{k=1}^{K}\rho_{B}^{x_{m,k}}\right)  -|m\rangle\langle m|_{\hat{M}%
}\right\Vert _{1}\right] \\
=\sum_{x_{1,1},\ldots,x_{M,K}}p_{X}(x_{1,1})\cdots p_{X}(x_{M,K})\left[
\frac{1}{M}\sum_{m=1}^{M}\frac{1}{2}\left\Vert \mathcal{M}_{B\rightarrow
\hat{M}}\left(  \frac{1}{K}\sum_{k=1}^{K}\rho_{B}^{x_{m,k}}\right)
-|m\rangle\langle m|_{\hat{M}}\right\Vert _{1}\right]  \leq\varepsilon_{1}.
\label{eq:rel-bound-priv-comm}%
\end{multline}
Let us define the state%
\begin{align}
\omega_{E}^{x_{m^{\prime}},x_{m}}  &  \equiv\frac{\frac{1}{K}\sum
_{k,k^{\prime}=1}^{K}\operatorname{Tr}_{B}\{\Omega_{B}^{x_{m^{\prime
},k^{\prime}}}\rho_{BE}^{x_{m,k}}\}}{q(x_{m^{\prime}}|x_{m})},\\
q(x_{m^{\prime}}|x_{m})  &  \equiv\frac{1}{K}\sum_{k,k^{\prime}=1}%
^{K}\operatorname{Tr}\{\Omega_{B}^{x_{m^{\prime},k^{\prime}}}\rho
_{BE}^{x_{m,k}}\}.
\end{align}
Consider that%
\begin{equation}
\sum_{m^{\prime}}q(x_{m^{\prime}}|x_{m})\omega_{E}^{x_{m^{\prime}},x_{m}%
}=\frac{1}{K}\sum_{k=1}^{K}\rho_{E}^{x_{m,k}}.
\end{equation}
Then we can write%
\begin{equation}
\mathcal{M}_{B\rightarrow\hat{M}}\left(  \frac{1}{K}\sum_{k=1}^{K}\rho
_{BE}^{x_{m,k}}\right)  =\sum_{m^{\prime}}q(x_{m^{\prime}}|x_{m})|m^{\prime
}\rangle\langle m^{\prime}|_{\hat{M}}\otimes\omega_{E}^{x_{m^{\prime}},x_{m}},
\end{equation}
so that%
\begin{equation}
\mathcal{M}_{B\rightarrow\hat{M}}\left(  \frac{1}{K}\sum_{k=1}^{K}\rho
_{B}^{x_{m,k}}\right)  =\sum_{m^{\prime}}q(x_{m^{\prime}}|x_{m})|m^{\prime
}\rangle\langle m^{\prime}|_{\hat{M}}.
\end{equation}
Using these observations, we can finally write%
\begin{align}
&  \frac{1}{M}\sum_{m=1}^{M}\frac{1}{2}\left\Vert \mathcal{M}_{B\rightarrow
\hat{M}}\left(  \frac{1}{K}\sum_{k=1}^{K}\rho_{BE}^{x_{m,k}}\right)
-|m\rangle\langle m|_{\hat{M}}\otimes\frac{1}{K}\sum_{k=1}^{K}\rho
_{E}^{x_{m,k}}\right\Vert _{1}\nonumber\\
&  =\frac{1}{M}\sum_{m=1}^{M}\frac{1}{2}\left\Vert \sum_{m^{\prime}%
}q(x_{m^{\prime}}|x_{m})|m^{\prime}\rangle\langle m^{\prime}|_{\hat{M}}%
\otimes\omega_{E}^{x_{m^{\prime}},x_{m}}-|m\rangle\langle m|_{\hat{M}}%
\otimes\sum_{m^{\prime}}q(x_{m^{\prime}}|x_{m})\omega_{E}^{x_{m^{\prime}%
},x_{m}}\right\Vert _{1}\\
&  \leq\frac{1}{M}\sum_{m=1}^{M}\sum_{m^{\prime}}q(x_{m^{\prime}}%
|x_{m})\left[  \frac{1}{2}\left\Vert |m^{\prime}\rangle\langle m^{\prime
}|_{\hat{M}}\otimes\omega_{E}^{x_{m^{\prime}},x_{m}}-|m\rangle\langle
m|_{\hat{M}}\otimes\omega_{E}^{x_{m^{\prime}},x_{m}}\right\Vert _{1}\right] \\
&  =\frac{1}{M}\sum_{m=1}^{M}\sum_{m^{\prime}}q(x_{m^{\prime}}|x_{m})\left[
\frac{1}{2}\left\Vert |m^{\prime}\rangle\langle m^{\prime}|_{\hat{M}%
}-|m\rangle\langle m|_{\hat{M}}\right\Vert _{1}\right] \\
&  =\frac{1}{M}\sum_{m=1}^{M}\sum_{m^{\prime}\neq m}q(x_{m^{\prime}}|x_{m})\\
&  =\frac{1}{M}\sum_{m=1}^{M}\frac{1}{2}\left\Vert \mathcal{M}_{B\rightarrow
\hat{M}}\left(  \frac{1}{K}\sum_{k=1}^{K}\rho_{B}^{x_{m,k}}\right)
-|m\rangle\langle m|_{\hat{M}}\right\Vert _{1}.
\end{align}
Combining with \eqref{eq:rel-bound-priv-comm}, the above development implies
that%
\begin{multline}
\sum_{x_{1,1},\ldots,x_{M,K}}p_{X}(x_{1,1})\cdots p_{X}(x_{M,K}%
)\label{eq:final-rel-decode-condition}\\
\times\left[  \frac{1}{M}\sum_{m=1}^{M}\frac{1}{2}\left\Vert \mathcal{M}%
_{B\rightarrow\hat{M}}\left(  \frac{1}{K}\sum_{k=1}^{K}\rho_{BE}^{x_{m,k}%
}\right)  -|m\rangle\langle m|_{\hat{M}}\otimes\frac{1}{K}\sum_{k=1}^{K}%
\rho_{E}^{x_{m,k}}\right\Vert _{1}\right]  \leq\varepsilon_{1}.
\end{multline}

Now we consider the state on Eve's systems and the analysis of privacy. If $m$
and $k$ are fixed, then her state is%
\begin{equation}
\rho_{X^{MK}E}^{m,k}=\rho_{X_{1,1}}\otimes\cdots\otimes\rho_{X_{m,k-1}}%
\otimes\rho_{X_{m,k}E}\otimes\rho_{X_{m,k+1}}\otimes\cdots\otimes\rho
_{X_{M,K}}.
\end{equation}
(For simplicity of notation, in the above and what follows we are labeling her
systems $X^{\prime\prime}$ as $X$.) However, $k$ is chosen uniformly at
random, and so conditioned on the message $m$ being fixed, the state of Eve's
systems is as follows:%
\begin{align}
\rho_{X^{MK}E}^{m}  &  \equiv\frac{1}{K}\sum_{k=1}^{K}\rho_{X^{MK}E}^{m,k}\\
&  =\rho_{X_{1,1}}\otimes\cdots\otimes\rho_{X_{m-1,K}}\nonumber\\
&  \qquad\otimes\left[  \frac{1}{K}\sum_{k=1}^{K}\rho_{X_{m,1}}\otimes
\cdots\otimes\rho_{X_{m,k-1}}\otimes\rho_{X_{m,k}E}\otimes\rho_{X_{m,k+1}%
}\otimes\cdots\otimes\rho_{X_{m,K}}\right] \nonumber\\
&  \qquad\otimes\rho_{X_{m+1,1}}\otimes\cdots\otimes\rho_{X_{M,K}}.
\end{align}
We would like to show for $\varepsilon_{2}\in(0,1)$ that%
\begin{equation}
\frac{1}{2}\left\Vert \rho_{X^{MK}E}^{m}-\rho_{X^{MK}}\otimes\widetilde{\rho
}_{E}\right\Vert _{1}\leq\varepsilon_{2},
\end{equation}
for some state $\widetilde{\rho}_{E}$. By the invariance of the trace distance
with respect to tensor-product states, i.e.,%
\begin{equation}
\left\Vert \sigma\otimes\tau-\omega\otimes\tau\right\Vert _{1}=\left\Vert
\sigma-\omega\right\Vert _{1}, \label{eq:trace-dist-prop}%
\end{equation}
we find that%
\begin{align}
&  \frac{1}{2}\left\Vert \rho_{X^{MK}E}^{m}-\rho_{X^{MK}}\otimes
\widetilde{\rho}_{E}\right\Vert _{1}\\
&  =\frac{1}{2}\left\Vert \rho_{X_{m,1}\cdots X_{m,K}E}^{m}-\rho
_{X_{m,1}\cdots X_{m,K}}\otimes\widetilde{\rho}_{E}\right\Vert _{1}\\
&  =\frac{1}{2}\left\Vert \frac{1}{K}\sum_{k=1}^{K}\rho_{X_{m,1}}\otimes
\cdots\otimes\rho_{X_{m,k-1}}\otimes\left(  \rho_{X_{m,k}E}-\rho_{X_{m,k}%
}\otimes\widetilde{\rho}_{E}\right)  \otimes\rho_{X_{m,k+1}}\otimes
\cdots\otimes\rho_{X_{m,K}}\right\Vert _{1}.
\end{align}
From Lemma~\ref{thm:convex-split} and the relation in
\eqref{eq:TD-to-PD}\ between trace distance and purified distance, we find
that if we pick $K$ such that%
\begin{equation}
\log_{2}K=\widetilde{I}_{\max}^{\sqrt{\varepsilon_{2}}-\eta_{2}}(E;X)_{\rho
}+2\log_{2}( 1/\eta_{2}) , \label{eq:key-size}%
\end{equation}
then we are guaranteed that%
\begin{equation}
\frac{1}{2}\left\Vert \rho_{X^{MK}E}^{m}-\rho_{X^{MK}}\otimes\widetilde{\rho
}_{E}\right\Vert _{1}\leq\sqrt{\varepsilon_{2}},
\end{equation}
where $\widetilde{\rho}_{E}$ is some state such that $P(\widetilde{\rho}%
_{E},\rho_{E})\leq\sqrt{\varepsilon_{2}}-\eta_{2}$.

Consider that we can rewrite%
\begin{align}
&  \frac{1}{2}\left\Vert \rho_{X^{MK}E}^{m}-\rho_{X^{MK}}\otimes
\widetilde{\rho}_{E}\right\Vert _{1}\\
&  =\frac{1}{2}\left\Vert \frac{1}{K}\sum_{k=1}^{K}\sum_{x_{1,1}\cdots
x_{M,K}}p_{X}(x_{1,1})\cdots p_{X}(x_{M,K})|x_{1,1}\cdots x_{M,K}%
\rangle\langle x_{1,1}\cdots x_{M,K}|_{X^{M,K}}\otimes\left(  \rho
_{E}^{x_{m,k}}-\widetilde{\rho}_{E}\right)  \right\Vert _{1}\\
&  =\frac{1}{2}\left\Vert \sum_{x_{1,1}\cdots x_{M,K}}p_{X}(x_{1,1})\cdots
p_{X}(x_{M,K})|x_{1,1}\cdots x_{M,K}\rangle\langle x_{1,1}\cdots
x_{M,K}|_{X^{M,K}}\otimes\left(  \frac{1}{K}\sum_{k=1}^{K}\rho_{E}^{x_{m,k}%
}-\widetilde{\rho}_{E}\right)  \right\Vert _{1}\\
&  =\sum_{x_{1,1}\cdots x_{M,K}}p_{X}(x_{1,1})\cdots p_{X}(x_{M,K})\left[
\frac{1}{2}\left\Vert \frac{1}{K}\sum_{k=1}^{K}\rho_{E}^{x_{m,k}}%
-\widetilde{\rho}_{E}\right\Vert _{1}\right]  \leq\sqrt{\varepsilon_{2}}.
\label{eq:privacy-condition-proof}%
\end{align}
Applying \eqref{eq:trace-dist-prop} to \eqref{eq:privacy-condition-proof}, we
find that%
\begin{equation}
\sum_{x_{1,1}\cdots x_{M,K}}p_{X}(x_{1,1})\cdots p_{X}(x_{M,K})\left[
\frac{1}{2}\left\Vert |m\rangle\langle m|_{\hat{M}}\otimes\frac{1}{K}%
\sum_{k=1}^{K}\rho_{E}^{x_{m,k}}-|m\rangle\langle m|_{\hat{M}}\otimes
\widetilde{\rho}_{E}\right\Vert _{1}\right]  \leq\sqrt{\varepsilon_{2}}.
\label{eq:final-privacy-condition}%
\end{equation}
Putting together \eqref{eq:message-key-size}, \eqref{eq:key-size},
\eqref{eq:final-rel-decode-condition}, and \eqref{eq:final-privacy-condition},
we find that if
\begin{align}
\log_{2}M  &  =I_{H}^{\varepsilon_{1}-\eta_{1}}(X;B)_{\rho}-\log
_{2}(4\varepsilon_{1}/\eta_{1}^{2})-\left[  \widetilde{I}_{\max}%
^{\sqrt{\varepsilon_{2}}-\eta_{2}}(E;X)_{\rho}+2\log_{2}( 1/\eta_{2}) \right]
\\
&  =I_{H}^{\varepsilon_{1}-\eta_{1}}(X;B)_{\rho}-\widetilde{I}_{\max}%
^{\sqrt{\varepsilon_{2}}-\eta_{2}}(E;X)_{\rho}-\log_{2}(4\varepsilon_{1}%
/\eta_{1}^{2})-2\log_{2}( 1/\eta_{2}) ,
\end{align}
then we have by the triangle inequality that%
\begin{equation}
\sum_{x_{1,1},\ldots,x_{M,K}}p_{X}(x_{1,1})\cdots p_{X}(x_{M,K})\left[
\frac{1}{2}\left\Vert \mathcal{M}_{B\rightarrow\hat{M}}\left(  \frac{1}{K}%
\sum_{k=1}^{K}\rho_{BE}^{x_{m,k}}\right)  -|m\rangle\langle m|_{\hat{M}%
}\otimes\widetilde{\rho}_{E}\right\Vert _{1}\right]  \leq\varepsilon_{1}%
+\sqrt{\varepsilon_{2}}.
\end{equation}
So this gives what is achievable with shared randomness (again, no difference
between average and maximal error if shared randomness is allowed).

We now show how to derandomize the code. We take the above and average over
all messages $m$. We find that%
\begin{align}
&  \varepsilon_{1}+\sqrt{\varepsilon_{2}}\nonumber\\
&  \geq\frac{1}{M}\sum_{m=1}^{M}\sum_{x_{1,1},\ldots,x_{M,K}}p_{X}%
(x_{1,1})\cdots p_{X}(x_{M,K})\left[  \frac{1}{2}\left\Vert \mathcal{M}%
_{B\rightarrow\hat{M}}\left(  \frac{1}{K}\sum_{k=1}^{K}\rho_{BE}^{x_{m,k}%
}\right)  -|m\rangle\langle m|_{\hat{M}}\otimes\widetilde{\rho}_{E}\right\Vert
_{1}\right] \\
&  =\sum_{x_{1,1},\ldots,x_{M,K}}p_{X}(x_{1,1})\cdots p_{X}(x_{M,K})\left(
\frac{1}{M}\sum_{m=1}^{M}\left[  \frac{1}{2}\left\Vert \mathcal{M}%
_{B\rightarrow\hat{M}}\left(  \frac{1}{K}\sum_{k=1}^{K}\rho_{BE}^{x_{m,k}%
}\right)  -|m\rangle\langle m|_{\hat{M}}\otimes\widetilde{\rho}_{E}\right\Vert
_{1}\right]  \right)  .
\end{align}
So we can conclude that there exist particular values $x_{1,1}$, \ldots,
$x_{M,K}$ such that%
\begin{equation}
\frac{1}{M}\sum_{m=1}^{M}\left[  \frac{1}{2}\left\Vert \mathcal{M}%
_{B\rightarrow\hat{M}}\left(  \frac{1}{K}\sum_{k=1}^{K}\rho_{BE}^{x_{m,k}%
}\right)  -|m\rangle\langle m|_{\hat{M}}\otimes\widetilde{\rho}_{E}\right\Vert
_{1}\right]  \leq\varepsilon_{1}+\sqrt{\varepsilon_{2}}.
\label{eq:final-wiretap-perf}%
\end{equation}
Thus, our final conclusion is that the number of achievable bits that can be
sent such that the privacy error is no larger than $\varepsilon_{1}%
+\sqrt{\varepsilon_{2}}$ is equal to%
\begin{equation}
\log_{2}M=I_{H}^{\varepsilon_{1}-\eta_{1}}(X;B)_{\rho}-\widetilde{I}_{\max
}^{\sqrt{\varepsilon_{2}}-\eta_{2}}(E;X)_{\rho}-\log_{2}(4\varepsilon_{1}%
/\eta_{1}^{2})-2\log_{2}( 1/\eta_{2}) . \label{eq:rate-bound-wiretap}%
\end{equation}

\subsection{Second-order asymptotics for private classical communication}

In this section, I show how the lower bound on one-shot private capacity
leads to a non-trivial lower bound on the second-order coding rate of private
communication over an i.i.d.~cq wiretap channel. I also show how the bounds simplify for pure-state cq wiretap channels and when using binary phase-shift keying as a coding strategy for private communication over a pure-loss bosonic channel.

Applying
Lemma~\ref{lem:alt-dmax-to-dmax-smooth} to \eqref{eq:rate-bound-wiretap} with
$\gamma\in(0,\sqrt{\varepsilon}-\eta)$, we can take%
\begin{align}
\log_{2}M  &  =I_{H}^{\varepsilon_{1}-\eta_{1}}(X;B)_{\rho}-\widetilde
{I}_{\max}^{\sqrt{\varepsilon_{2}}-\eta_{2}}(E;X)_{\rho}-\log_{2}%
(4\varepsilon_{1}/\eta_{1}^{2})-2\log_{2}( 1/\eta_{2})\\
&  \geq I_{H}^{\varepsilon_{1}-\eta_{1}}(X;B)_{\rho}-D_{\max}^{\sqrt
{\varepsilon_{2} }-\eta_{2}-\gamma}(\rho_{XE}\Vert\rho_{X}\otimes\rho
_{E})-\log_{2}(4\varepsilon_{1}/\eta_{1}^{2})-2\log_{2}( 1/\eta_{2}) -\log
_{2}( 3/\gamma^{2}) .
\end{align}
while still achieving the performance in \eqref{eq:final-wiretap-perf}.

Substituting an i.i.d.~cq wiretap channel into the one-shot bounds, evaluating
for such a case and using the expansions for $I_{H}^{\varepsilon}$ in
\eqref{eq:MI-expand} and $D_{\max}^{\varepsilon}$ in \eqref{eq:MI-expand-2},
while taking $\eta_{1}=\eta_{2}=\gamma=1/\sqrt{n}$, for sufficiently large
$n$, we get that%
\begin{multline}
\log_{2}M_{\operatorname{priv}}^{\ast}(n,\varepsilon_{1}+\sqrt{\varepsilon
_{2}})\geq n\left[  I(X;B)_{\rho}-I(X;E)_{\rho}\right]
\label{eq:second-order-bound}\\
+\sqrt{nV(X;B)_{\rho}}\Phi^{-1}(\varepsilon_{1})+\sqrt{nV(X;E)_{\rho}}%
\Phi^{-1}(\varepsilon_{2})+O(\log n).
\end{multline}

\subsubsection{Example:\ Pure-state cq wiretap channel}

Let us consider applying the inequality in \eqref{eq:second-order-bound} to a
cq pure-state wiretap channel of the following form:%
\begin{equation}
x\rightarrow|\psi^{x}\rangle\langle\psi^{x}|_{B}\otimes|\varphi^{x}%
\rangle\langle\varphi^{x}|_{E},\label{eq:pure-state-cq-channel}%
\end{equation}
in which the classical input $x$ leads to a pure quantum state $|\psi
^{x}\rangle\langle\psi^{x}|_{B}$ for Bob and a pure quantum state
$|\varphi^{x}\rangle\langle\varphi^{x}|_{E}$ for Eve. This channel may seem a
bit particular, but we discuss in the next section how one can induce such a
channel from a practically relevant channel, known as the pure-loss bosonic
channel. In order to apply the inequality in \eqref{eq:second-order-bound} to
the channel in \eqref{eq:pure-state-cq-channel}, we fix a distribution
$p_{X}(x)$ over the input symbols, leading to the following classical--quantum
state:%
\begin{equation}
\rho_{XBE}\equiv\sum_{x}p_{X}(x)|x\rangle\langle x|_{X}\otimes|\psi^{x}%
\rangle\langle\psi^{x}|_{B}\otimes|\varphi^{x}\rangle\langle\varphi^{x}|_{E}.
\end{equation}
It is well known and straightforward to calculate that the following
simplifications occur%
\begin{align}
I(X;B)_{\rho}  & =H(B)_{\rho}=H(\rho_{B}),\\
I(X;E)_{\rho}  & =H(E)_{\rho}=H(\rho_{E}),
\end{align}
where $H(\sigma) \equiv - \operatorname{Tr} \{ \sigma \log_2 \sigma\}$ denotes the quantum entropy of a state $\sigma$ and %
\begin{align}
\rho_{B}  & =\sum_{x}p_{X}(x)|\psi^{x}\rangle\langle\psi^{x}|_{B},\\
\rho_{E}  & =\sum_{x}p_{X}(x)|\varphi^{x}\rangle\langle\varphi^{x}|_{E}.
\end{align}
Proposition~\ref{prop:variance-simplify-pure-state-cq} below demonstrates that a
similar simplification occurs for the information variance quantities in
\eqref{eq:second-order-bound}, in the special case of a pure-state cq wiretap
channel. By employing it, we find the following lower bound on the
second-order coding rate for a pure-state cq wiretap channel:%
\begin{multline}
\log_{2}M_{\operatorname{priv}}^{\ast}(n,\varepsilon_{1}+\sqrt{\varepsilon
_{2}})\geq n\left[  H(\rho_{B})-H(\rho_{E})\right]
\label{eq:pure-state-cq-2nd-order-bnd}\\
+\sqrt{nV(\rho_{B})}\Phi^{-1}(\varepsilon_{1})+\sqrt{nV(\rho_{E})}\Phi
^{-1}(\varepsilon_{2})+O(\log n),
\end{multline}
where $V(\rho_{B})$ and $V(\rho_{E})$ are defined from
\eqref{eq:entropy-var-def} below.

\begin{proposition}
\label{prop:variance-simplify-pure-state-cq}Let%
\begin{equation}
\rho_{XB}=\sum_{x}p_{X}(x)|x\rangle\langle x|_{X}\otimes|\psi^{x}%
\rangle\langle\psi^{x}|_{B}\label{eq:special-cq-state}%
\end{equation}
be a classical--quantum state corresponding to a pure-state ensemble
$\{p_{X}(x),|\psi^{x}\rangle_{B}\}_{x}$. Then the Holevo information variance
$V(X;B)_{\rho}=V(\rho_{XB}\Vert\rho_{X}\otimes\rho_{B})$\ is equal to the
entropy variance $V(\rho_{B})$ of the expected state $\rho_{B}=\sum_{x}%
p_{X}(x)|\psi^{x}\rangle\langle\psi^{x}|_{B}$, where%
\begin{equation}
V(\sigma)=\operatorname{Tr}\{\sigma\left[  -\log_{2}\sigma-H(\sigma)\right]  ^{2}\}.\label{eq:entropy-var-def}%
\end{equation}
That is, when $\rho_{XB}$ takes the special form in
\eqref{eq:special-cq-state}, the following equality holds%
\begin{equation}
V(X;B)_{\rho}=V(\rho_{B}).\label{eq:prop-statement-pure-state-cq}%
\end{equation}

\end{proposition}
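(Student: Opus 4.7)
The plan is a direct algebraic calculation, using the standard identity
\begin{equation}
V(\omega\Vert\tau)=\operatorname{Tr}\{\omega(\log_{2}\omega-\log_{2}\tau)^{2}\}-D(\omega\Vert\tau)^{2},
\end{equation}
applied with $\omega=\rho_{XB}$ and $\tau=\rho_{X}\otimes\rho_{B}$. Since $I(X;B)_{\rho}=H(\rho_{B})$ for a pure-state cq ensemble (as already recorded above the proposition), the $D^{2}$ term is simply $H(\rho_{B})^{2}$. So the work reduces to evaluating the second-moment-type trace $\operatorname{Tr}\{\rho_{XB}L^{2}\}$, where $L\equiv\log_{2}\rho_{XB}-\log_{2}\rho_{X}\otimes I-I\otimes\log_{2}\rho_{B}$, and then checking that the result matches the expanded form of $V(\rho_{B})$.

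First I would record the eigendecomposition of $\rho_{XB}$. Because the $|\psi^{x}\rangle_{B}$ are pure and the $|x\rangle_{X}$ orthonormal, the vectors $|x\rangle_{X}|\psi^{x}\rangle_{B}$ are orthonormal eigenvectors of $\rho_{XB}$ with eigenvalues $p_{X}(x)$. Thus $\log_{2}\rho_{XB}$ acts on $|x\rangle|\psi^{x}\rangle$ by the scalar $\log_{2}p_{X}(x)$, and the same is true for $\log_{2}\rho_{X}\otimes I$. Hence on this eigenbasis the first two contributions to $L$ cancel, leaving
\begin{equation}
L\,|x\rangle|\psi^{x}\rangle=-|x\rangle\otimes(\log_{2}\rho_{B})|\psi^{x}\rangle.
\end{equation}

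Next, since $\rho_{XB}=\sum_{x}p_{X}(x)|x\rangle\langle x|\otimes|\psi^{x}\rangle\langle\psi^{x}|$ is a spectral decomposition and $L$ is Hermitian,
\begin{equation}
\operatorname{Tr}\{\rho_{XB}L^{2}\}=\sum_{x}p_{X}(x)\,\bigl\Vert L|x\rangle|\psi^{x}\rangle\bigr\Vert^{2}=\sum_{x}p_{X}(x)\,\langle\psi^{x}|(\log_{2}\rho_{B})^{2}|\psi^{x}\rangle=\operatorname{Tr}\{\rho_{B}(\log_{2}\rho_{B})^{2}\},
\end{equation}
where the last step uses $\sum_{x}p_{X}(x)|\psi^{x}\rangle\langle\psi^{x}|=\rho_{B}$. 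Combining this with $D(\rho_{XB}\Vert\rho_{X}\otimes\rho_{B})=H(\rho_{B})$ yields
\begin{equation}
V(X;B)_{\rho}=\operatorname{Tr}\{\rho_{B}(\log_{2}\rho_{B})^{2}\}-H(\rho_{B})^{2},
\end{equation}
and expanding the square in \eqref{eq:entropy-var-def} shows the right-hand side equals $V(\rho_{B})$, giving \eqref{eq:prop-statement-pure-state-cq}.

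There is no serious obstacle; the one point to verify carefully is that $\log_{2}\rho_{XB}$ and $\log_{2}\rho_{X}\otimes I$ genuinely agree on the support of $\rho_{XB}$, which is immediate from the eigenbasis computation above (and only values on the support matter since $\rho_{XB}$ annihilates its kernel). The same reasoning applies verbatim to $V(X;E)_{\rho}$ in the setting of the proposition, which is why the bound \eqref{eq:pure-state-cq-2nd-order-bnd} follows from \eqref{eq:second-order-bound}.
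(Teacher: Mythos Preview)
Your proof is correct and follows essentially the same approach as the paper: both use the identity $V(\omega\Vert\tau)=\operatorname{Tr}\{\omega L^{2}\}-D(\omega\Vert\tau)^{2}$ with $L=\log_{2}\rho_{XB}-\log_{2}(\rho_{X}\otimes\rho_{B})$, observe that $I(X;B)_{\rho}=H(\rho_{B})$, and reduce $\operatorname{Tr}\{\rho_{XB}L^{2}\}$ to $\operatorname{Tr}\{\rho_{B}(\log_{2}\rho_{B})^{2}\}$. The only cosmetic difference is that the paper writes out $L$ explicitly as an operator (including the $\log_{2}p_{X}(x)(I_{B}-|\psi^{x}\rangle\langle\psi^{x}|_{B})$ piece on the orthogonal complement) and expands $L^{2}$ before tracing, whereas you work directly with the action of $L$ on the eigenvectors $|x\rangle|\psi^{x}\rangle$ and use $\operatorname{Tr}\{\rho_{XB}L^{2}\}=\sum_{x}p_{X}(x)\Vert L|x\rangle|\psi^{x}\rangle\Vert^{2}$, which is slightly more economical but the same argument.
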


\begin{proof}
For the cq state in \eqref{eq:special-cq-state}, consider that $I(X;B)_{\rho
}=H(B)_{\rho}=H(\rho_{B})$. Furthermore, we have that%
\begin{align}
\log_{2}\rho_{XB}  & =\log_{2}\left[  \sum_{x}p_{X}(x)|x\rangle\langle
x|_{X}\otimes|\psi^{x}\rangle\langle\psi^{x}|_{B}\right]  \\
& =\sum_{x}\log_{2}\left(  p_{X}(x)\right)  |x\rangle\langle x|_{X}%
\otimes|\psi^{x}\rangle\langle\psi^{x}|_{B},
\end{align}
which holds because the eigenvectors of $\rho_{XB}$ are $\left\{
|x\rangle_{X}\otimes|\psi^{x}\rangle_{B}\right\}  _{x}$. Then%
\begin{align}
V(X;B)  & =V(\rho_{XB}\Vert\rho_{X}\otimes\rho_{B})\\
& =\operatorname{Tr}\{\rho_{XB}\left[  \log_{2}\rho_{XB}-\log_{2}\left(
\rho_{X}\otimes\rho_{B}\right)  \right]  ^{2}\}-\left[  I(X;B)_{\rho}\right]
^{2}\\
& =\operatorname{Tr}\{\rho_{XB}\left[  \log_{2}\rho_{XB}-\log_{2}\rho
_{X}\otimes I_{B}-I_{X}\otimes\log_{2}\rho_{B}\right]  ^{2}\}-\left[
H(B)_{\rho}\right]  ^{2}.\label{eq:1st-step-ent-var}%
\end{align}
By direct calculation, we have that%
\begin{align}
& \log_{2}\rho_{XB}-\log_{2}\rho_{X}\otimes I_{B}-I_{X}\otimes\log_{2}\rho
_{B}\nonumber\\
& =\sum_{x}\log_{2}\left(  p_{X}(x)\right)  |x\rangle\langle x|_{X}%
\otimes|\psi^{x}\rangle\langle\psi^{x}|_{B}-\sum_{x}\log_{2}\left[
p_{X}(x)\right]  |x\rangle\langle x|_{X}\otimes I_{B}-\sum_{x}|x\rangle\langle
x|_{X}\otimes\log_{2}\rho_{B}\\
& =-\sum_{x}|x\rangle\langle x|_{X}\otimes\left[  \log_{2}\left(
p_{X}(x)\right)  \left(  I_{B}-|\psi^{x}\rangle\langle\psi^{x}|_{B}\right)
+\log_{2}\rho_{B}\right]  .
\end{align}
Observe that $I_{B}-|\psi^{x}\rangle\langle\psi^{x}|_{B}$ is the projection
onto the space orthogonal to $|\psi^{x}\rangle_{B}$. Then we find that%
\begin{align}
& \left[  \log_{2}\rho_{XB}-\log_{2}\rho_{X}\otimes I_{B}-I_{X}\otimes\log
_{2}\rho_{B}\right]  ^{2}\nonumber\\
& =\left[  -\sum_{x}|x\rangle\langle x|_{X}\otimes\left[  \log_{2}\left(
p_{X}(x)\right)  \left(  I_{B}-|\psi^{x}\rangle\langle\psi^{x}|_{B}\right)
+\log_{2}\rho_{B}\right]  \right]  ^{2}\\
& =\sum_{x}|x\rangle\langle x|_{X}\otimes\left[  \log_{2}\left(
p_{X}(x)\right)  \left(  I_{B}-|\psi^{x}\rangle\langle\psi^{x}|_{B}\right)
+\log_{2}\rho_{B}\right]  ^{2}.
\end{align}
Furthermore, we have that%
\begin{multline}
\left[  \log_{2}\left(  p_{X}(x)\right)  \left(  I_{B}-|\psi^{x}\rangle
\langle\psi^{x}|_{B}\right)  +\log_{2}\rho_{B}\right]  ^{2}%
\label{eq:entropy-var-expand}\\
=\left[  \log_{2}\left(  p_{X}(x)\right)  \right]  ^{2}\left(  I_{B}-|\psi
^{x}\rangle\langle\psi^{x}|_{B}\right)  +\log_{2}\left(  p_{X}(x)\right)
\left(  I_{B}-|\psi^{x}\rangle\langle\psi^{x}|_{B}\right)  \left(  \log
_{2}\rho_{B}\right)  \\
+\log_{2}\left(  p_{X}(x)\right)  \left(  \log_{2}\rho_{B}\right)  \left(
I_{B}-|\psi^{x}\rangle\langle\psi^{x}|_{B}\right)  +\left[  \log_{2}\rho
_{B}\right]  ^{2}%
\end{multline}
So then, by direct calculation,%
\begin{align}
& \operatorname{Tr}\{\rho_{XB}\left[  \log_{2}\rho_{XB}-\log_{2}\rho
_{X}\otimes I_{B}-I_{X}\otimes\log_{2}\rho_{B}\right]  ^{2}\}\\
& =\operatorname{Tr}\left\{  \left[  \sum_{x^{\prime}}p_{X}(x^{\prime
})|x^{\prime}\rangle\langle x^{\prime}|_{X}\otimes|\psi^{x^{\prime}}%
\rangle\langle\psi^{x^{\prime}}|_{B}\right]  \left[  \sum_{x}|x\rangle\langle
x|_{X}\otimes\left[  \log_{2}\left(  p_{X}(x)\right)  \left(  I_{B}-|\psi
^{x}\rangle\langle\psi^{x}|_{B}\right)  +\log_{2}\rho_{B}\right]  ^{2}\right]
\right\}  \\
& =\sum_{x}p_{X}(x)\operatorname{Tr}\left\{  |\psi^{x}\rangle\langle\psi
^{x}|_{B}\left[  \left[  \log_{2}\left(  p_{X}(x)\right)  \left(  I_{B}%
-|\psi^{x}\rangle\langle\psi^{x}|_{B}\right)  +\log_{2}\rho_{B}\right]
^{2}\right]  \right\}  \\
& =\sum_{x}p_{X}(x)\operatorname{Tr}\left\{  |\psi^{x}\rangle\langle\psi
^{x}|_{B}\left[  \log_{2}\rho_{B}\right]  ^{2}\right\}  \\
& =\operatorname{Tr}\{\rho_{B}\left[  \log_{2}\rho_{B}\right]  ^{2}%
\}.\label{eq:last-step-ent-var}%
\end{align}
In the second-to-last equality, we used the expansion in
\eqref{eq:entropy-var-expand} and the fact that $|\psi^{x}\rangle\langle
\psi^{x}|_{B}$ and $I_{B}-|\psi^{x}\rangle\langle\psi^{x}|_{B}$ are
orthogonal. Finally, putting together \eqref{eq:1st-step-ent-var}\ and
\eqref{eq:last-step-ent-var}, we conclude \eqref{eq:prop-statement-pure-state-cq}.
\end{proof}

\subsubsection{Example: Pure-loss bosonic channel}

We can induce a pure-state cq wiretap channel from a pure-loss bosonic
channel. In what follows, we consider a coding scheme called binary phase-shift keying
(BPSK). Let us recall just the basic facts needed from Gaussian quantum
information to support the argument that follows (a curious reader can consult
\cite{S17}\ for further details). The pure-loss channel of transmissivity
$\eta\in\left(  0,1\right)  $ is such that if the sender inputs a coherent
state $|\alpha\rangle$ with $\alpha\in\mathbb{C}$, then the outputs for Bob
and Eve are the coherent states $|\sqrt{\eta}\alpha\rangle_{B}$ and
$|\sqrt{1-\eta}\alpha\rangle_{E}$, respectively. Note that the overlap of any
two coherent states $|\alpha\rangle$ and $|\beta\rangle$ is equal to
$\left\vert \left\langle \alpha|\beta\right\rangle \right\vert ^{2}%
=e^{-\left\vert \alpha-\beta\right\vert ^{2}}$, and this is in fact the main
quantity that we need to evaluate the information quantities in
\eqref{eq:pure-state-cq-2nd-order-bnd}. The average photon number of a
coherent state $|\alpha\rangle$ is equal to $\left\vert \alpha\right\vert
^{2}$. A BPSK-coding scheme induces the following pure-state cq wiretap
channel from the pure-loss channel:%
\begin{align}
0  & \rightarrow|\alpha\rangle_{A}\rightarrow|\sqrt{\eta}\alpha\rangle
_{B}\otimes|\sqrt{1-\eta}\alpha\rangle_{E},\\
1  & \rightarrow|-\alpha\rangle_{A}\rightarrow|-\sqrt{\eta}\alpha\rangle
_{B}\otimes|-\sqrt{1-\eta}\alpha\rangle_{E}.
\end{align}
That is, if the sender would like to transmit the symbol \textquotedblleft%
0,\textquotedblright\ then she prepares the coherent state $|\alpha\rangle
_{A}$ at the input, and the physical channel prepares the coherent state
$|\sqrt{\eta}\alpha\rangle_{B}$ for Bob and $|\sqrt{1-\eta}\alpha\rangle_{E}$
for Eve. A similar explanation holds for when the sender inputs the symbol
\textquotedblleft1.\textquotedblright\ A BPSK-coding scheme is such that
the distribution $p_{X}(x)$ is unbiased:\ there is an equal probability 1/2 to
pick \textquotedblleft0\textquotedblright\ or \textquotedblleft%
1\textquotedblright\ when selecting codewords. Thus, the expected density
operators at the output for Bob and Eve are respectively as follows:%
\begin{align}
\rho_{B}  & =\frac{1}{2}\left(  |\sqrt{\eta}\alpha\rangle\langle\sqrt{\eta
}\alpha|_{B}+|-\sqrt{\eta}\alpha\rangle\langle-\sqrt{\eta}\alpha|_{B}\right)
,\\
\rho_{E}  & =\frac{1}{2}\left(  |\sqrt{1-\eta}\alpha\rangle\langle\sqrt
{1-\eta}\alpha|_{B}+|-\sqrt{1-\eta}\alpha\rangle\langle-\sqrt{1-\eta}%
\alpha|_{B}\right)  .
\end{align}
A straightforward computation reveals that the eigenvalues for $\rho_{B}$ are
a function only of the overlap $\left\vert \langle-\sqrt{\eta}\alpha
|\sqrt{\eta}\alpha\rangle\right\vert ^{2}=e^{-4\eta\left\vert \alpha
\right\vert ^{2}}\equiv e^{-4\eta\bar{n}}$ and are equal to \cite{GW12}
\begin{equation}
p^{B}(\eta,\bar{n})\equiv\frac{1}{2}\left(  1+e^{-2\eta\bar{n}}\right)
,\qquad 1-p^{B}(\eta,\bar{n})
=
\frac{1}{2}\left(  1-e^{-2\eta\bar{n}}\right)  .
\end{equation}
Similarly, the eigenvalues of $\rho_{E}$ are given by%
\begin{equation}
p^{E}(\eta,\bar{n})\equiv\frac{1}{2}\left(  1+e^{-2\left(  1-\eta\right)
\bar{n}}\right)  ,\qquad
1-p^{E}(\eta,\bar{n})
=
\frac{1}{2}\left(  1-e^{-2\left(  1-\eta\right)
\bar{n}}\right)  .
\end{equation}
We can then immediately plug in to \eqref{eq:pure-state-cq-2nd-order-bnd} to
find a lower bound on the second-order coding rate for private communication
over the pure-loss bosonic channel:%
\begin{multline}
\log_{2}M_{\operatorname{priv}}^{\ast}(n,\varepsilon_{1}+\sqrt{\varepsilon
_{2}})\geq n\left[  h_{2}(p^{B}(\eta,\bar{n}))-h_{2}(p^{E}(\eta,\bar
{n}))\right]  \\
+\sqrt{nv_{2}(p^{B}(\eta,\bar{n}))}\Phi^{-1}(\varepsilon_{1})+\sqrt
{nv_{2}(p^{E}(\eta,\bar{n}))}\Phi^{-1}(\varepsilon_{2})+O(\log n),\label{eq:pure-loss-2nd-order}
\end{multline}
where $h_{2}$ and $v_{2}$ respectively denote the binary entropy and binary
entropy variance:%
\begin{align}
h_{2}(\gamma)  & \equiv-\gamma\log_{2}\gamma-(1-\gamma)\log_{2}(1-\gamma),\\
v_{2}(\gamma)  & \equiv\gamma\left[  \log_{2}\gamma+h_{2}(\gamma)\right]
^{2}+(1-\gamma)\left[  \log_{2}\left(  1-\gamma\right)  +h_{2}(\gamma)\right]
^{2}.
\end{align}
A benchmark against which we can compare the performance of a BPSK\ code with
$\left\vert \alpha\right\vert ^{2}=\bar{n}$ is the energy-constrained private
capacity of a pure-loss bosonic channel \cite{wilde2016energy}, given by%
\begin{equation}
g(\eta\bar{n})-g((1-\eta)\bar{n}), \label{eq:actual-p-cap}
\end{equation}
where $g(x)\equiv(x+1)\log_{2}(x+1)-x\log_{2}x$. Figure~\ref{fig:results} plots the normal approximation \cite{polyanskiy10} of the lower bound on the second-order
coding rate of BPSK\ coding for various parameter choices for $\varepsilon
_{1}$, $\varepsilon_{2}$, $\eta$, and $\bar{n}$, comparing it against the
asymptotic performance of BPSK\ and the actual energy-constrained private capacity in \eqref{eq:actual-p-cap}. The normal approximation consists of all terms in \eqref{eq:pure-loss-2nd-order} besides the $O(\log n)$ term and typically serves as a good approximation for non-asymptotic capacity even for small values of $n$ (when \eqref{eq:pure-loss-2nd-order} is not necessarily valid), as previously observed in \cite{polyanskiy10,TH12,TBR15}.

\begin{figure}[ptb]
\begin{center}
\subfloat[]{\includegraphics[width=.47\columnwidth]{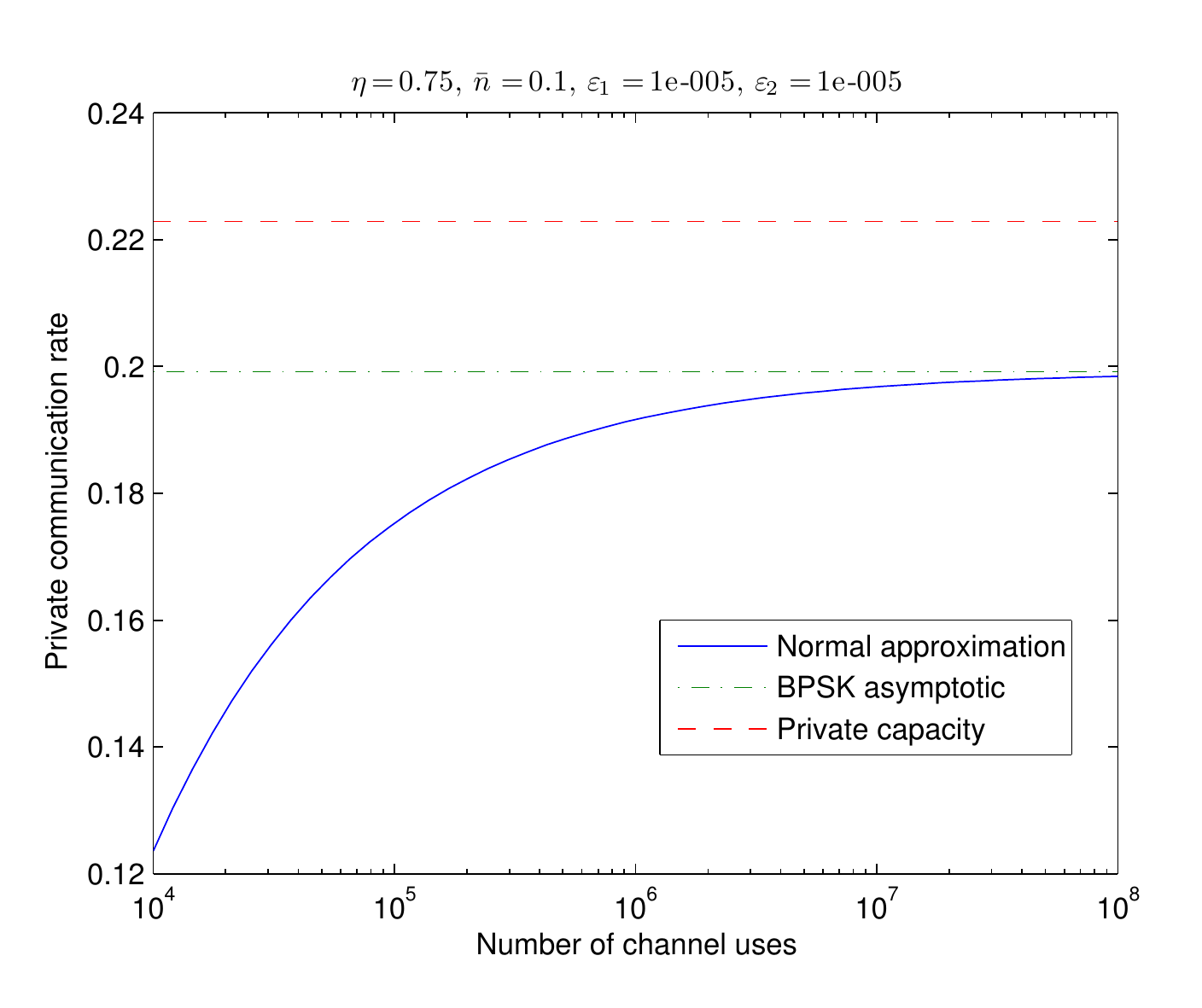}}
\qquad
\subfloat[]{\includegraphics[width=.47\columnwidth]{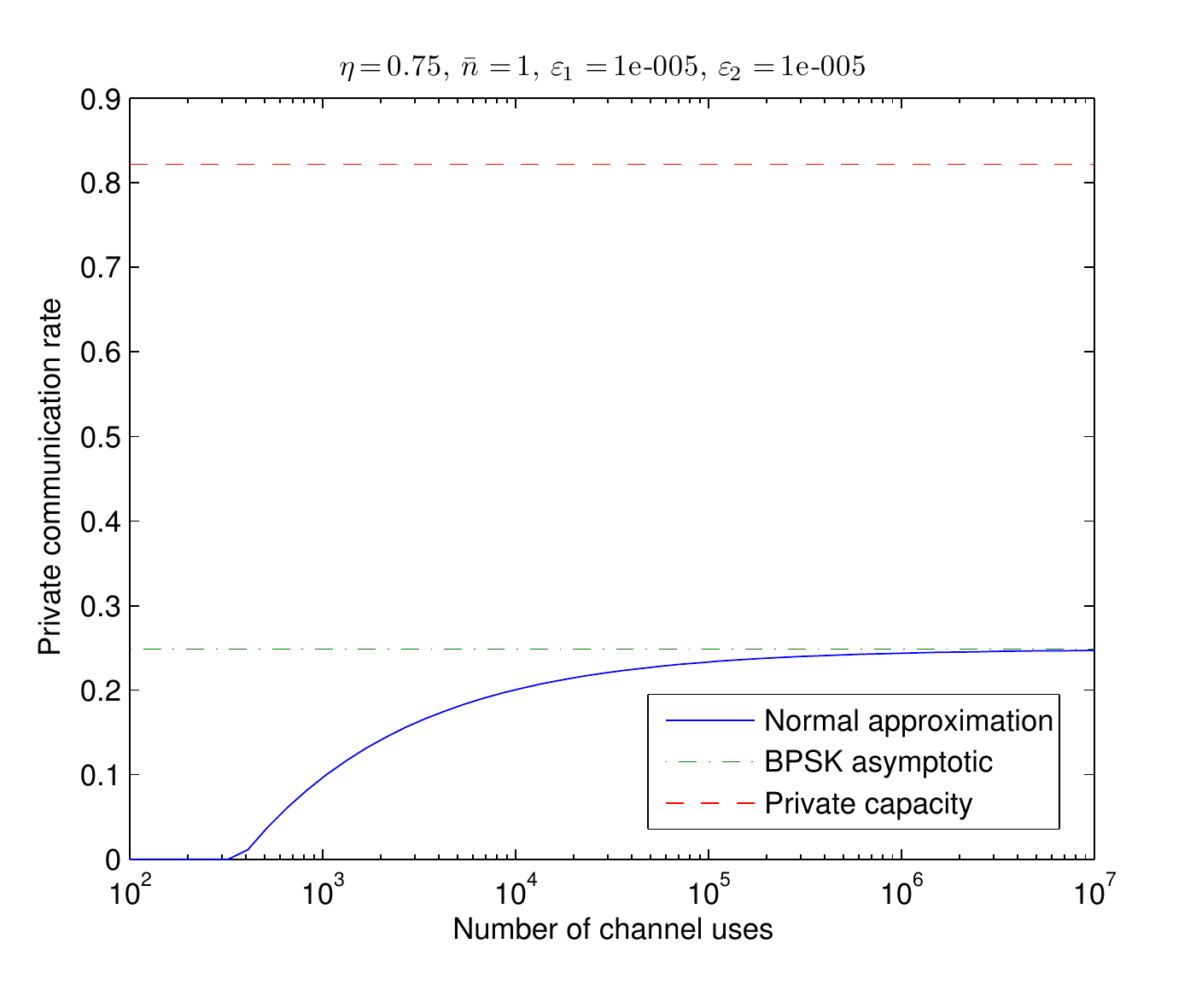}}\newline%
\subfloat[]{\includegraphics[width=.47\columnwidth]{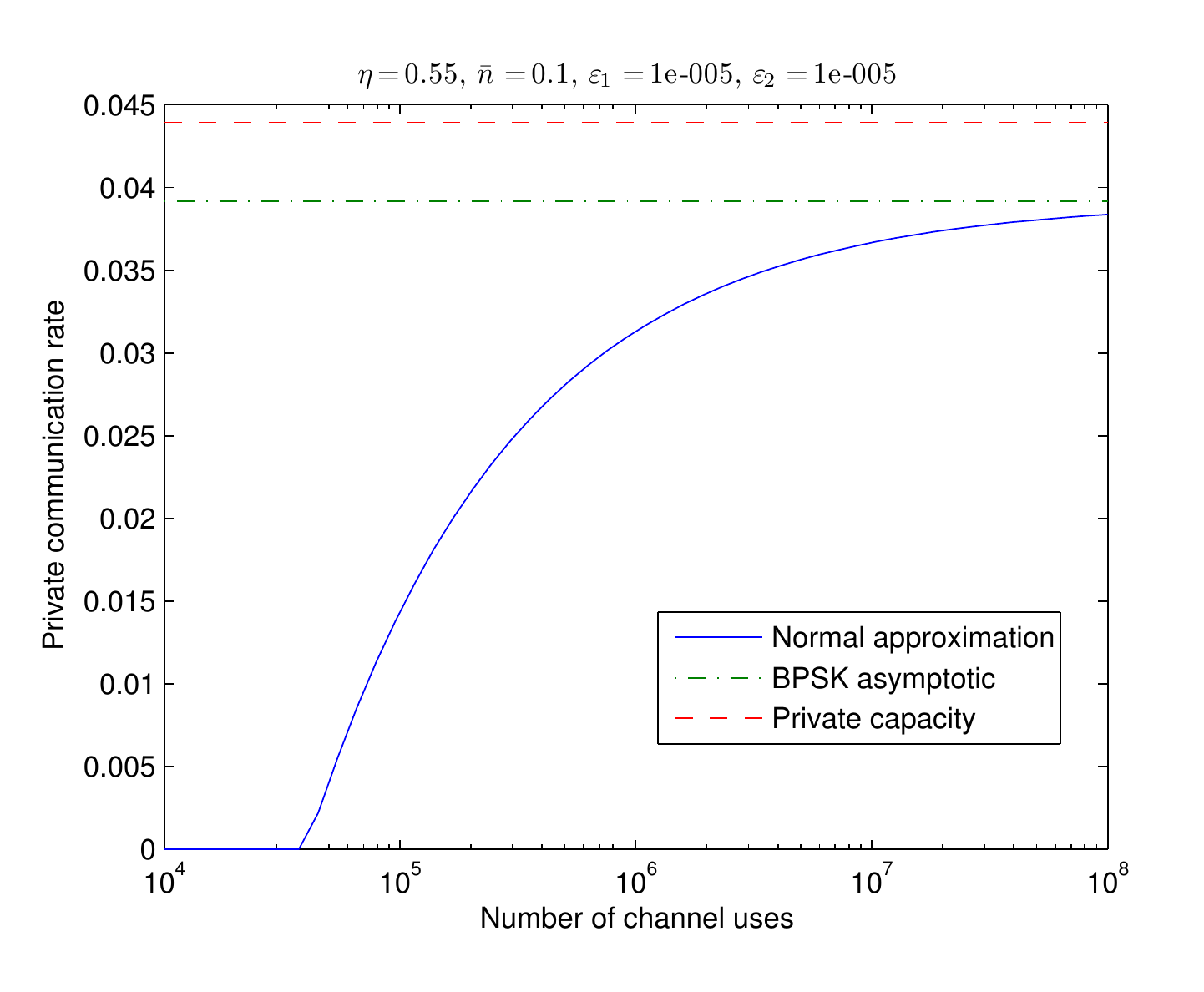}}
\qquad
\subfloat[]{\includegraphics[width=.47\columnwidth]{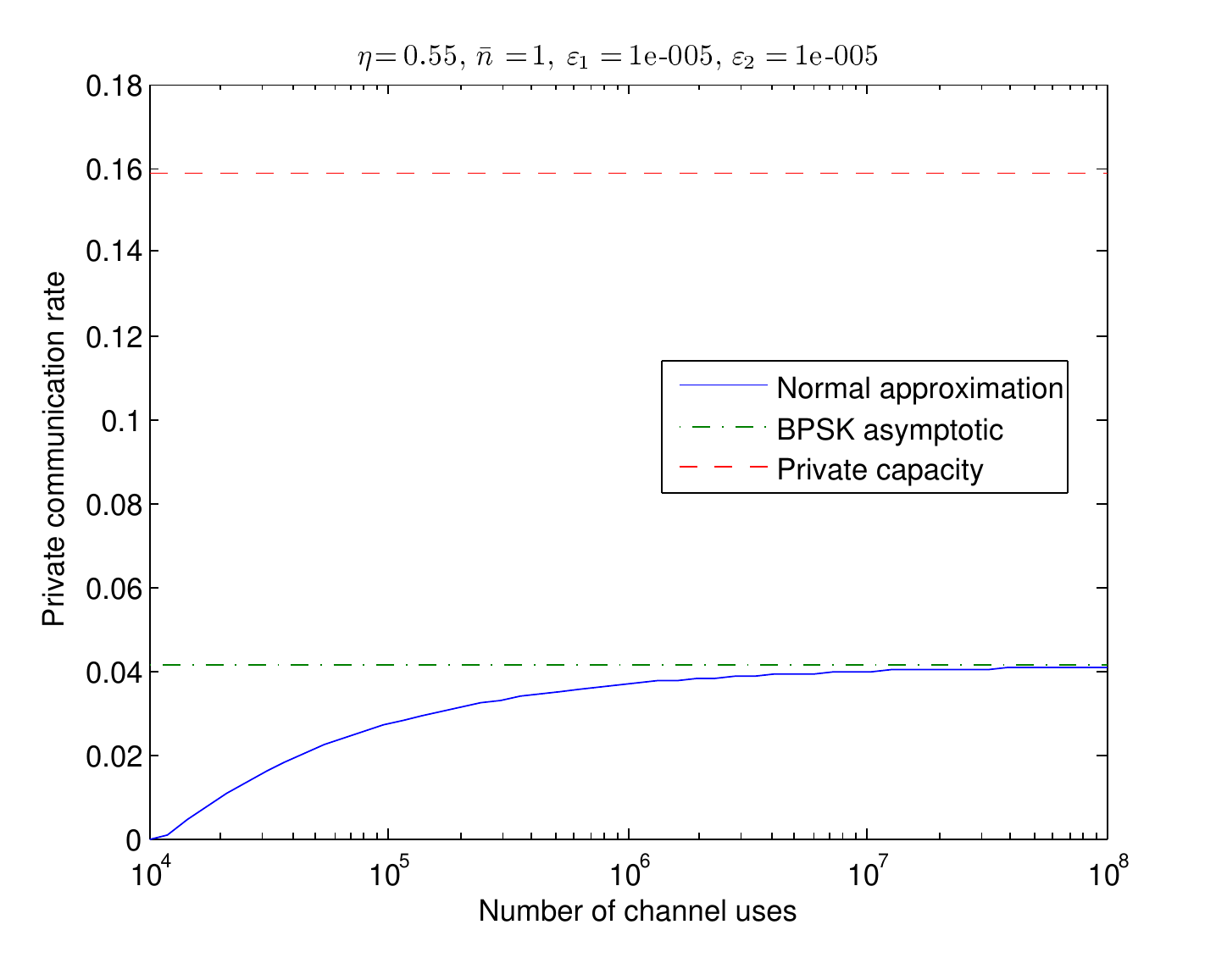}}
\end{center}
\caption{The figures plot the normal approximation for non-asymptotic BPSK private communication (using \eqref{eq:pure-loss-2nd-order}), the asymptotic limit for BPSK, and the asymptotic energy-constrained private capacity for various values of the channel transmissivity $\eta$,
the mean photon number $\bar{n}$, $\varepsilon_1$,
and $\varepsilon_2$.}%
\label{fig:results}%
\end{figure}

\section{Conclusion\label{sec:concl}}

This paper establishes a lower bound on the $\varepsilon$-one-shot private
classical capacity of a cq wiretap channel, which in turn leads to a lower
bound on the second-order coding rate for private communication over an
i.i.d.~cq wiretap channel. The main techniques used are position-based
decoding \cite{AJW17}\ in order to guarantee that Bob can decode reliably and
convex splitting \cite{ADJ17}\ to guarantee that Eve cannot determine which
message Alice transmitted. It is my opinion that these two methods represent a
powerful approach to quantum information theory, having already been used
effectively in a variety of contexts in \cite{ADJ17,AJW17}.

For future work, it would be good to improve upon the lower bounds given here.
Extensions of the methods of \cite{YSP16} and \cite{TB16}\ might be helpful in
this endeavor.\bigskip

\textit{Note}: After the completion of the results in the present paper,
Naqueeb Warsi informed the author of an unpublished result from \cite{Warsi17}%
,  which establishes a lower bound on the $\varepsilon$-one-shot private
capacity of a cq wiretap channel in terms of a difference of the hypothesis
testing mutual information and a smooth max-mutual information.

\bigskip

\textbf{Acknowledgements.} I am grateful to Anurag Anshu, Saikat Guha, Rahul
Jain, Haoyu Qi, Qingle Wang, and Naqueeb Warsi for discussions related to the
topic of this paper. I acknowledge support from the Office of Naval Research
and the National Science Foundation.

\appendix

\section{Proof of convex-split lemma}

\label{app:convex-split}

For the sake of completeness, this appendix features a proof of
Lemma~\ref{thm:convex-split}. Let $\widetilde{\rho}_{AB}$ be the optimizer for%
\begin{equation}
\lambda^{\ast}\equiv\widetilde{I}_{\max}^{\sqrt{\varepsilon}-\eta}(B;A)_{\rho
}=\inf_{\rho_{AB}^{\prime}\ :\ P(\rho_{AB}^{\prime},\rho_{AB})\leq
\sqrt{\varepsilon}-\eta}D_{\max}(\rho_{AB}^{\prime}\Vert\rho_{A}\otimes
\rho_{B}^{\prime}). \label{eq:def-lambda-star}%
\end{equation}
We take $\widetilde{\rho}_{B}$ as the marginal of $\widetilde{\rho}_{AB}$. We
define the following state, which we think of as an approximation to
$\tau_{A_{1}\cdots A_{K}B}$:%
\begin{equation}
\widetilde{\tau}_{A_{1}\cdots A_{K}B}\equiv\frac{1}{K}\sum_{k=1}^{K}%
\rho_{A_{1}}\otimes\cdots\otimes\rho_{A_{k-1}}\otimes\widetilde{\rho}_{A_{k}%
B}\otimes\rho_{A_{k+1}}\otimes\cdots\otimes\rho_{A_{K}}.
\end{equation}
In fact, it is a good approximation if $\sqrt{\varepsilon}-\eta$ is
small:\ Consider from joint concavity of the root fidelity that%
\begin{align}
&  \sqrt{F}(\widetilde{\tau}_{A_{1}\cdots A_{K}B},\tau_{A_{1}\cdots A_{K}%
B})\nonumber\\
&  \geq\frac{1}{K}\sum_{k=1}^{K}\sqrt{F}(\rho_{A_{1}}\otimes\cdots\otimes
\rho_{A_{k-1}}\otimes\widetilde{\rho}_{A_{k}B}\otimes\rho_{A_{k+1}}%
\otimes\cdots\otimes\rho_{A_{K}},\nonumber\\
&  \qquad\qquad\qquad\qquad\rho_{A_{1}}\otimes\cdots\otimes\rho_{A_{k-1}%
}\otimes\rho_{A_{k}B}\otimes\rho_{A_{k+1}}\otimes\cdots\otimes\rho_{A_{K}})\\
&  =\frac{1}{K}\sum_{k=1}^{K}\sqrt{F}(\widetilde{\rho}_{A_{k}B},\rho_{A_{k}%
B})\\
&  =\sqrt{F}(\widetilde{\rho}_{AB},\rho_{AB}),
\end{align}
which in turn implies that%
\begin{equation}
F(\widetilde{\tau}_{A_{1}\cdots A_{K}B},\tau_{A_{1}\cdots A_{K}B})\geq
F(\widetilde{\rho}_{AB},\rho_{AB}). \label{eq:scrambled-state-fid}%
\end{equation}
So the inequality in \eqref{eq:scrambled-state-fid}, the definition of the
purified distance, and the fact that $P(\rho_{AB},\widetilde{\rho}_{AB}%
)\leq\sqrt{\varepsilon}-\eta$ imply that%
\begin{equation}
P(\widetilde{\tau}_{A_{1}\cdots A_{K}B},\tau_{A_{1}\cdots A_{K}B})\leq
\sqrt{\varepsilon}-\eta.
\end{equation}
Let $\omega=\sum_{y}p_{Y}(y)\omega^{y}$, for $p_{Y}$ a probability
distribution and $\{\omega^{y}\}_{y}$ a set of states. Then the following
property holds for quantum relative entropy and a state $\kappa$ such that
$\operatorname{supp}(\omega)\subseteq\operatorname{supp}(\kappa)$:%
\begin{equation}
D(\omega\Vert\kappa)=\sum_{y}p_{Y}(y)\left[  D(\omega^{y}\Vert\kappa
)-D(\omega^{y}\Vert\omega)\right]  . \label{eq:rel-ent-prop}%
\end{equation}
Applying \eqref{eq:rel-ent-prop}, it follows that%
\begin{multline}
D(\widetilde{\tau}_{A_{1}\cdots A_{K}B}\Vert\rho_{A_{1}}\otimes\cdots
\otimes\rho_{A_{K}}\otimes\widetilde{\rho}_{B})\label{eq:using-rel-ent-prop}\\
=\frac{1}{K}\sum_{k=1}^{K}D(\rho_{A_{1}}\otimes\cdots\otimes\rho_{A_{k-1}%
}\otimes\rho_{A_{k}B}\otimes\rho_{A_{k+1}}\otimes\cdots\otimes\rho_{A_{K}%
}\Vert\rho_{A_{1}}\otimes\cdots\otimes\rho_{A_{K}}\otimes\widetilde{\rho}%
_{B})\\
-\frac{1}{K}\sum_{k=1}^{K}D(\rho_{A_{1}}\otimes\cdots\otimes\rho_{A_{k-1}%
}\otimes\rho_{A_{k}B}\otimes\rho_{A_{k+1}}\otimes\cdots\otimes\rho_{A_{K}%
}\Vert\widetilde{\tau}_{A_{1}\cdots A_{K}B}).
\end{multline}
The first term in \eqref{eq:using-rel-ent-prop} on the right-hand side of the
equality simplifies as%
\begin{align}
&  \frac{1}{K}\sum_{k=1}^{K}D(\rho_{A_{1}}\otimes\cdots\otimes\rho_{A_{k-1}%
}\otimes\rho_{A_{k}B}\otimes\rho_{A_{k+1}}\otimes\cdots\otimes\rho_{A_{K}%
}\Vert\rho_{A_{1}}\otimes\cdots\otimes\rho_{A_{K}}\otimes\widetilde{\rho}%
_{B})\nonumber\\
&  =\frac{1}{K}\sum_{k=1}^{K}D(\rho_{A_{k}B}\Vert\rho_{A_{k}}\otimes
\widetilde{\rho}_{B})\\
&  =D(\rho_{AB}\Vert\rho_{A}\otimes\widetilde{\rho}_{B}).
\label{eq:convex-split-dev-key-1}%
\end{align}
We now lower bound the last term in \eqref{eq:using-rel-ent-prop}. Consider
that a partial trace over systems $A_{1}$, \ldots, $A_{k-1}$, $A_{k+1}$,
\ldots, $A_{K}$ gives%
\begin{multline}
D(\rho_{A_{1}}\otimes\cdots\otimes\rho_{A_{k-1}}\otimes\rho_{A_{k}B}%
\otimes\rho_{A_{k+1}}\otimes\cdots\otimes\rho_{A_{K}}\Vert\widetilde{\tau
}_{A_{1}\cdots A_{K}B})\label{eq:lower-bnd-convex-split-middle-step}\\
\geq D(\rho_{A_{k}B}\Vert\widetilde{\tau}_{A_{k}B})=D(\rho_{AB}\Vert\left[
1/K\right]  \widetilde{\rho}_{AB}+\left[  1-1/K\right]  \rho_{A}%
\otimes\widetilde{\rho}_{B}),
\end{multline}
where the equality follows because $\widetilde{\tau}_{A_{k}B}=\left[
1/K\right]  \widetilde{\rho}_{AB}+\left[  1-1/K\right]  \rho_{A}%
\otimes\widetilde{\rho}_{B}$. Thus, averaging the inequality in
\eqref{eq:lower-bnd-convex-split-middle-step}\ over $k$ implies that%
\begin{multline}
\frac{1}{K}\sum_{k=1}^{K}D(\rho_{A_{1}}\otimes\cdots\otimes\rho_{A_{k-1}%
}\otimes\rho_{A_{k}B}\otimes\rho_{A_{k+1}}\otimes\cdots\otimes\rho_{A_{K}%
}\Vert\widetilde{\tau}_{A_{1}\cdots A_{K}B}) \label{eq:convex-split-dev-key-2}%
\\
\geq D(\rho_{AB}\Vert\left[  1/K\right]  \widetilde{\rho}_{AB}+\left[
1-1/K\right]  \rho_{A}\otimes\widetilde{\rho}_{B}),
\end{multline}
Putting together \eqref{eq:using-rel-ent-prop},
\eqref{eq:convex-split-dev-key-1}, and \eqref{eq:convex-split-dev-key-2}, we
find that%
\begin{multline}
D(\widetilde{\tau}_{A_{1}\cdots A_{K}B}\Vert\rho_{A_{1}}\otimes\cdots
\otimes\rho_{A_{K}}\otimes\rho_{B})\label{eq:CS-bnds-1}\\
\leq D(\rho_{AB}\Vert\rho_{A}\otimes\widetilde{\rho}_{B})-D(\rho_{AB}%
\Vert\left[  1/K\right]  \widetilde{\rho}_{AB}+\left[  1-1/K\right]  \rho
_{A}\otimes\widetilde{\rho}_{B})
\end{multline}
By the definition of $\lambda^{\ast}$ in \eqref{eq:def-lambda-star}, we have
that%
\begin{equation}
\widetilde{\rho}_{AB}\leq2^{\lambda^{\ast}}\rho_{A}\otimes\widetilde{\rho}%
_{B}, \label{eq:op-ineq-d-max}%
\end{equation}
which means that%
\begin{equation}
\left[  1/K\right]  \widetilde{\rho}_{AB}+\left[  1-1/K\right]  \rho
_{A}\otimes\widetilde{\rho}_{B}\leq\left[  1+(2^{\lambda^{\ast}}-1)/K\right]
\rho_{A}\otimes\widetilde{\rho}_{B}. \label{eq:op-ineq-d-max-CS}%
\end{equation}
An important property of quantum relative entropy is that $D(\omega\Vert
\tau)\geq D(\omega\Vert\tau^{\prime})$ if $\tau\leq\tau^{\prime}$. Applying it
to \eqref{eq:op-ineq-d-max-CS}\ and the right-hand side of
\eqref{eq:CS-bnds-1}, we get that%
\begin{align}
&  D(\rho_{AB}\Vert\rho_{A}\otimes\widetilde{\rho}_{B})-D(\rho_{AB}%
\Vert\left[  1/K\right]  \widetilde{\rho}_{AB}+\left[  1-1/K\right]  \rho
_{A}\otimes\widetilde{\rho}_{B})\nonumber\\
&  \leq D(\rho_{AB}\Vert\rho_{A}\otimes\widetilde{\rho}_{B})-D(\rho_{AB}%
\Vert\left[  1+(2^{\lambda^{\ast}}-1)/K\right]  \rho_{A}\otimes\widetilde
{\rho}_{B})\label{eq:CS-almost-1}\\
&  =D(\rho_{AB}\Vert\rho_{A}\otimes\widetilde{\rho}_{B})-D(\rho_{AB}\Vert
\rho_{A}\otimes\widetilde{\rho}_{B})+\log_{2}\left[  1+(2^{\lambda^{\ast}%
}-1)/K\right] \\
&  =\log_{2}\left[  1+(2^{\lambda^{\ast}}-1)/K\right]  .
\label{eq:CS-almost-3}%
\end{align}
Then the well known inequality $D(\omega\Vert\tau)\geq-\log_{2}F(\omega,\tau
)$, \eqref{eq:CS-bnds-1}, and \eqref{eq:CS-almost-1}--\eqref{eq:CS-almost-3}
imply that%
\begin{equation}
-\log_{2}F(\widetilde{\tau}_{A_{1}\cdots A_{K}B},\rho_{A_{1}}\otimes
\cdots\otimes\rho_{A_{K}}\otimes\widetilde{\rho}_{B})\leq\log_{2}\left[
1+(2^{\lambda^{\ast}}-1)/K\right]  ,
\end{equation}
which in turn implies that%
\begin{equation}
F(\widetilde{\tau}_{A_{1}\cdots A_{K}B},\rho_{A_{1}}\otimes\cdots\otimes
\rho_{A_{K}}\otimes\widetilde{\rho}_{B})\geq\frac{1}{1+\frac{2^{\lambda^{\ast
}}-1}{K}}=1-\frac{2^{\lambda^{\ast}}-1}{K+2^{\lambda^{\ast}}-1}\geq
1-\frac{2^{\lambda^{\ast}}}{K}.
\end{equation}
So if we pick $K$ such that%
\begin{equation}
\log_{2}K=\inf_{\rho_{AB}^{\prime}\ :\ P(\rho_{AB}^{\prime},\rho_{AB}%
)\leq\sqrt{\varepsilon}-\eta}D_{\max}(\rho_{AB}^{\prime}\Vert\rho_{A}%
\otimes\rho_{B}^{\prime})+2\log_{2}\left(  \frac{1}{\eta}\right)  ,
\end{equation}
then we are guaranteed that%
\begin{equation}
P(\widetilde{\tau}_{A_{1}\cdots A_{K}B},\rho_{A_{1}}\otimes\cdots\otimes
\rho_{A_{K}}\otimes\widetilde{\rho}_{B})\leq\eta.
\end{equation}
By the triangle inequality for the purified distance, we then get that%
\begin{align}
&  P(\tau_{A_{1}\cdots A_{K}B},\rho_{A_{1}}\otimes\cdots\otimes\rho_{A_{K}%
}\otimes\widetilde{\rho}_{B})\nonumber\\
&  \leq P(\tau_{A_{1}\cdots A_{K}B},\widetilde{\tau}_{A_{1}\cdots A_{K}%
B})+P(\widetilde{\tau}_{A_{1}\cdots A_{K}B},\rho_{A_{1}}\otimes\cdots
\otimes\rho_{A_{K}}\otimes\widetilde{\rho}_{B})\\
&  \leq\left(  \sqrt{\varepsilon}-\eta\right)  +\eta=\sqrt{\varepsilon}.
\end{align}
This concludes the proof.

\bibliographystyle{alpha}
\bibliography{Ref}

\end{document}